\theoremstyle{definition}
\newtheorem{definition}{Definition}
\theoremstyle{theorem}
\newtheorem{theorem}{Theorem}
\theoremstyle{remark}
\theoremstyle{proposition}
\theoremstyle{corollary}
\theoremstyle{proof}
\newtheorem{assumption}{Assumption}
\theoremstyle{assumption}
\theoremstyle{property}
\theoremstyle{lemma}
\begin{document}
%
\title{Containment Control Approach for Steering Opinion in a Social Network}
%
%
%

\author{Hossein Rastgoftar
\thanks{{\color{black}H. Rastgoftar is with the Department
of Aerospace and Mechanical Engineering, University of Arizona, Tucson,
AZ, 85721 USA e-mail: hrastgoftar@arizona.edu.}}
}
\maketitle

\begin{abstract}
The paper studies the problem of steering multi-dimensional opinion in a social network. Assuming the society of desire consists of stubborn and regular agents, stubborn agents are considered as leaders who specify the desired opinion distribution as a distributed reward or utility function. In this context, each regular agent is seen as a follower, updating its bias on the initial opinion and influence weights by averaging their observations of the rewards their influencers have received. Assuming random graphs with reducible and irreducible topology specify the influences on regular agents, opinion evolution is represented as a containment control problem in which stability and convergence to the final opinion are proven. 


\end{abstract}

\section{Introduction}
The evolution of opinions in social systems has received a lot of attention from the control community in recent years. To understand how ideas, views, and attitudes can spread in a group, researchers have employed a number of methods to mathematically describe onion evolution. These mathematical models have the potential to forecast the results of public debate, reduce the dissemination of false information, and offer recommendation systems. This paper studies the problem of opinion evolution in a social system when we adapt the Fredrick Johnson to  analyze multi-dimensional opinion dynamics and formulate it as a containment control problem.

\subsection{Related Work}
In the literature, Fredrick Johnson (FJ) \cite{zhou2024friedkin, frasca2024opinion} and DeGroot \cite{wu2022mixed, zhou2020two, liu2022probabilistic} models have been used to analyze the evolution of beliefs in social networks. Adapting the Fredrick Johnson (FJ) model, the authors of \cite{kang2022coevolution, 10591448, lin2018opinion} study the evolution of opinion in signed networks, which can be used to assess collaborative and competitive behaviors in a society. The multidimensional opinion dynamics under the FJ model is studied in \cite{parsegov2016novel, zhou2022multidimensional} to analyze the evolution of multiple topics in a social network. The stability and convergence of the dynamics of evolution of opinion in random networks are studied in \cite{wang2024final, xing2024transient, xing2024concentration}. The authors investigate the impact of stubborn agents' opinions on polarization and disagreement within a social network in \cite{shirzadi2024stubborn}. Co-evolution of opinion and action has been studied in \cite{9303954, mo2022coevolution, wang2024co, 10168221}. Opinion evolution was formulated as a mean-field game in Ref. \cite{6760259}. In \cite{debuse2024study}, the authors study the evolution of opinions in a time-varying network, assuming that agents do not follow the most divergent points of view. The authors in \cite{sprenger2024control} apply the FJ model to learn about users' opinions and build a recommendation system for social networks.  

This paper applies containment control models to steer opinions in a social network. Containment control is  a well-received  leader-follower method in which group coordination is guided by a finite number of leaders and acquired by followers through local communication. Refs. \cite{cao2012distributed, ji2008containment} provide necessary and sufficient conditions for stability and convergence in the multi-agent containment coordination problem. Containment under fixed and switching inter-agent communications is investigated in Refs. \cite{cao2012distributed, notarstefano2011containment, li2015containment} Also,  multi-agent containment control in the presence of time-varying delays is analyzed in \cite{shen2016containment, liu2014containment}. Refs. \cite{wang2013distributed, liu2015distributed}  have studied finite-time containment control of a multi-agent system. Containment control has been defined as a decentralized affine transformation in \cite{rastgoftar2021scalable, rastgoftar2021safe, rastgoftar2022spatio}.

\subsection{Contribution}
We consider the evolution of multidimensional opinion in a community of agents under the FJ model \cite{zhou2024friedkin} where we innovatively present evolution as a containment control problem to steer opinions in a society. By classifying agents as regular and stubborn agents, we consider scenarios at which the influences among the regular agents are propagated through stochastic communications structured by reducible and irreducible networks and time-varying communication weights.

By considering stubborn agents as leaders specifying a reward distribution for a community of desire, the paper develops a framework for regular agents to achieve a desired distribution in a decentralized manner by maximizing a locally and ad hoc observable reward. More precisely, the regular agents are not aware of the reward distribution, specified by the leaders (stubborn agents) over the opinion space, but they are aware of the reward of their influencer agents, which enables them to   update their biased initial opinion and influence weights through maximization of average reward. 


Compared to the existing literature, the paper offers the following novel contributions:
\begin{itemize}
    \item While containment control models have been previously used for multi-agent coordination, this paper applied the containment control model to steer opinions in a social network. in this context, the paper provides the proof of convergence and final containment of multidimensional opinions under both irreducible and reducible communication strategies.
    \item The paper provides a framework for decentralized acquisition of a desired distribution of the community, determined by the leaders of the community, by the regular agents. 
    \item The paper formally specifies conditions for reducibility of opinion network dynamics. By using these conditions, the paper provides a proof for the minimum convergence time of the community's opinion towards the final opinion.
\end{itemize}

This paper is organized as follows: The objective of the paper  is overviewed in Section \ref{Problem Statement}. The opinion evolution dynamics under reducible and irreducible networks are modeled as a containment control problem in Section \ref{Opinion Evolution Dynamics}. Section \ref{Decentralized Acquisition of Biases and Influences} develops a strategy for regular agents to assign their influence weights and biases, on their initial opinions, in a decentralized manner. Simulation results are presented in Section \ref{Simulation Results} and followed by the concluding remarks in Section \ref{Conclusion}.

 \section{Problem Statement} \label{Problem Statement}
We consider a system of $N$ agents defined by set $\mathcal{V}=\left\{1,\cdots,N\right\}$ evolving over a multi-dimensional opinion space $\mathcal{O}=\left[0,1\right]^n$ where $n$ denoting the dimension of the space defines $n$ possible subjects defined by set $\mathcal{S}=\left\{1,\cdots,n\right\}$. We use $\mathbf{o}_i(k)=\begin{bmatrix}o_{i,1}(k)&\cdots& o_{i,n}(k)\end{bmatrix}^T$ to aggregate the opinion of agent  $i\in\mathcal{V}$,  where  
$
o_{i,j}(k)\in \left[0,1\right]$ denotes the opinion of agent $i\in \mathcal{V}$ about subject $j\in \mathcal{S}$ at discrete time $k\in \mathbb{Z}$.

We use graph $\mathcal{G}\left(\mathcal{V},\mathcal{E}_k\right)$ to specify interactions among the agents, where $\mathcal{E}_k\subset \mathcal{V}\times \mathcal{V}$ the edges of the graph $\mathcal{G}$. To be more precise, agent $i\in \mathcal{V}$ influences on agent $j\in \mathcal{V}$, if $(i,j)\in \mathcal{E}$. Given $\mathcal{E}_k$, set
\begin{equation}
    \mathcal{N}_i\left(k\right)=\left\{j\in \mathcal{V}:\left(j,i\right)\in \mathcal{E}_k\right\},\qquad \forall i\in \mathcal{V},~\forall k\in \mathbb{Z},
\end{equation}
defines all agents influencing on $i\in \mathcal{V}$.

By using the Friedkin-Johnsen model \cite{zhou2024friedkin}, the opinion of the agent $i\in \mathcal{V}$ is updated  by
\begin{equation}\label{opinionevolutionindividual}
\mathbf{o}_i\left(k+1\right)=\left(1-\lambda_i(k)\right)\sum_{j\in \mathcal{N}_i}w_{i,j}(k)\mathbf{o}_j(k)+\lambda_i(k)\mathbf{o}_i(0),
\end{equation}
for every $k\in \mathbb{Z}$, where $\lambda_i(k)\in \left[0,1\right]$ is the bias of agent $i\in \mathcal{V}$ on her/his opinion; $w_{i,j}\geq 0$ is the weight of influence of agent $j\in \mathcal{N}_i$ on agent $i\in \mathcal{V}$ at discrete time $k$; and
\begin{equation}
    \sum_{j\in \mathcal{N}_i}w_{i,j}\left(k\right)=1.
\end{equation}
We say the agent $i\in \mathcal{V}$ is \textit{stubborn} if $\lambda_i=1$. Otherwise, the agent $i\in \mathcal{V}$ is called \textit{regular}. Therefore, set $\mathcal{V}$ can be expressed as 
\begin{equation}
    \mathcal{V}=\mathcal{V}_S\cup \mathcal{V}_R
\end{equation}
where $\mathcal{V}_S=\left\{i\in \mathcal{V}:\lambda_i(k)=1,\forall k\in \mathbb{Z}\right\}$ and $\mathcal{V}_R=\mathcal{V}\setminus \mathcal{V}_S$.

The main objective of the paper is that the regular agents achive a desired distribution assigned by the stubbron agents through local communication with their in-neighbors. To this end, we define $\mathcal{U}:\mathcal{O}\rightarrow \mathbb{R}_+$ as the utility function and  make the following assumptions:
\begin{assumption}\label{unknownutility}
    Regular agent $i\in \mathcal{V}$ does not know about the distribution of $\mathcal{U}$ over the opinion space $\mathcal{O}$.
\end{assumption}
\begin{assumption}\label{knownutility}
    Regular agent $i\in \mathcal{V}$ can be informed about the $u_i(k)=\mathcal{U}(\mathbf{o}_i(k))$ and $u_j(k)=\mathcal{U}(\mathbf{o}_j(k))$ for every $j\in \mathcal{N}_i(k)$ at every discrete time $k$.
\end{assumption}

Given the above problem setting, the paer aims to provide solutions for the following two problems:

\textbf{Problem 1:} It is desired  to present the network opinion dynamics as a containment control problem where we provide proofs for the stability and convergence  of the network opinion evolution under time-varying biases, influences, and network structure. 

\textbf{Problem 2:} It is desired that every regular agent $i\in \mathcal{V}_R$ determine influene $w_{i,j}(k)$ and bias $\lambda_i(k)$ in a decentralized fashion while meeting Assumptions \ref{unknownutility} and \ref{knownutility},  so that the ultimate opinion of the regular agents maximizes the utility function $\mathcal{U}$.





\section{Opinion Evolution Dynamics}\label{Opinion Evolution Dynamics}
Let $\mathcal{V}_R=\left\{1,\cdots,N_R\right\}$ and  $\mathcal{V}_S=\left\{N_R+1,\cdots,N\right\}$ identify the regular and stubborn agents, respectively. Then,
\begin{equation}
    \mathbf{x}(k)=\mathrm{vec}\left(\begin{bmatrix}\mathbf{o}_1(k)&\cdots&\mathbf{o}_{N_R}(k)\end{bmatrix}^T\right)\in \mathbb{R}^{nN_R\times 1}
\end{equation}
as the state vector, aggregating components of opinions of the regular agents. Note that ``vec'' is the matrix vectorization operator that converts a matrix $P$ by $Q$ into a vector $PQ$ by $1$. Also,
\begin{equation}
    \mathbf{u}=\mathrm{vec}\left(\begin{bmatrix}
        \mathbf{o}_{N_R+1}&\cdots&\mathbf{o}_{N}&\mathbf{o}_1(1)&\cdots&\mathbf{o}_{N_R}(1)
    \end{bmatrix}^T\right)\in \mathbb{R}^{Nn\times 1}
\end{equation}
is a constant vector aggregating opinion components of the stubborn agents and the initial opinion of regular agents. 
\begin{definition}
    The paper defines the influence matrix
$\mathbf{W}=\left[W_{i,j}\right]\in \mathbb{R}^{N_R\times N}$ as the influence matrix aggregating the influences of all agents on regular agents, where
\begin{equation}
    W_{i,j}=\begin{cases}
        w_{i,j}&i\in \mathcal{V}_R,~j\in \mathcal{N}_i\\
        0&\mathrm{otherwise}
    \end{cases}
    .
\end{equation}
\end{definition}
\begin{definition}
    The paper defines the bias matrix
\begin{equation}
    \mathbf{\Lambda}(k)=\mathrm{diag}\left(\lambda_1(k),\cdots,\lambda_{N_R}(k)\right)\in \mathbb{R}^{N_R\times N_R}.
\end{equation}
to quantify the bias of every regular agent on its initial opinion.
\end{definition}

We analyze stability and convergence of the opinion evolution dynamics under irreducible and reducible network in Sections \ref{Opinion Evolution under Irreducible Network} and \ref{Opinion Evolution under Reducible Network}, respectively.

\subsection{Opinion Evolution under Irreducible Network}\label{Opinion Evolution under Irreducible Network}
Given $\mathbf{W}$ and $\mathbf{\Lambda}$, we define
\begin{equation}
    \mathbf{L}=\begin{bmatrix}
        \left(\mathbf{I}_{N_R}-\mathbf{\Lambda}\right)\mathbf{W}&\mathbf{\Lambda}
    \end{bmatrix}
    \in \mathbb{R}^{N_R\times\left(N+N_R\right)}
\end{equation}
Note that the $(i,j)$ entry of $\mathbf{L}=\left[L_{i,j}\right]$, denote by $L_{i,j}$, is obtained as follows:
\begin{equation}
    L_{i,j}=\begin{cases}
        \left(1-\lambda_i\right)w_{i,j}&j\leq N,~i\in \mathcal{V}_R,~j\in \mathcal{N}_i\\
        \lambda_i&j>N,j=i\\
        0&\mathrm{otherwise}
    \end{cases}
    .
\end{equation}
Let  $\mathbf{L}$ be partitioned as follows:
\begin{equation}\label{dynamics}
    \mathbf{L}=\begin{bmatrix}
        \mathbf{A}&\mathbf{B}
    \end{bmatrix}
\end{equation}
where $\mathbf{A}\in \mathbb{R}^{N_R\times N_R}$ and $\mathbf{B}\in\mathbb{R}^{N_R\times N}$. Then, the network opinion dynamics is obtained by
\begin{equation}\label{networkopiniondynamics}
    \mathbf{x}\left(k+1\right)=\left(\mathbf{I}_n\otimes \mathbf{A}\right)\mathbf{x}\left(k\right)+\left(\mathbf{I}_n\otimes \mathbf{B}\right)\mathbf{u}
\end{equation}

\begin{theorem}
    If graph $\mathcal{G}$ is defined such that there exists at least one path from agent, then, $i\in \mathcal{V}_R$,  then,
     dynamics \eqref{networkopiniondynamics} is stable. 
\end{theorem}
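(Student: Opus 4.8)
The plan is to prove that \eqref{networkopiniondynamics} is Schur stable, i.e.\ that the spectral radius of its state matrix $\mathbf{I}_n\otimes\mathbf{A}$ is strictly below one; the constant forcing $(\mathbf{I}_n\otimes\mathbf{B})\mathbf{u}$ only shifts the equilibrium and does not affect internal stability. First I would use the Kronecker-product spectrum identity: the eigenvalues of $\mathbf{I}_n\otimes\mathbf{A}$ are exactly those of $\mathbf{A}$, each with multiplicity $n$, so that $\rho(\mathbf{I}_n\otimes\mathbf{A})=\rho(\mathbf{A})$. The whole problem therefore reduces to showing $\rho(\mathbf{A})<1$.

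Next I would record the structure of $\mathbf{A}$. By \eqref{dynamics} and the definition of $\mathbf{L}$, $\mathbf{A}$ is the regular-to-regular block of $(\mathbf{I}_{N_R}-\mathbf{\Lambda})\mathbf{W}$, with entries $A_{i,j}=(1-\lambda_i)w_{i,j}\ge 0$ for $i,j\in\mathcal{V}_R$. Since $\lambda_i\in[0,1]$ and $\sum_{j\in\mathcal{N}_i}w_{i,j}=1$, each row satisfies
\[
\sum_{j\in\mathcal{V}_R}A_{i,j}=(1-\lambda_i)\sum_{j\in\mathcal{N}_i\cap\mathcal{V}_R}w_{i,j}\le 1 ,
\]
so $\mathbf{A}$ is nonnegative and substochastic, and a row-sum (Gershgorin) bound immediately yields $\rho(\mathbf{A})\le\|\mathbf{A}\|_\infty\le 1$. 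Only the strict inequality then remains.

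The upgrade from $\rho(\mathbf{A})\le 1$ to $\rho(\mathbf{A})<1$ is where I expect the real work, and I would carry it out by contradiction via Perron--Frobenius. Suppose $\rho(\mathbf{A})=1$. As $\mathbf{A}\ge 0$, there exists a nonzero $\mathbf{v}\ge 0$ with $\mathbf{A}\mathbf{v}=\mathbf{v}$; set $v^\ast=\max_i v_i>0$ and $\mathcal{M}=\{i\in\mathcal{V}_R:v_i=v^\ast\}$. For $i\in\mathcal{M}$ the chain $v^\ast=\sum_j A_{i,j}v_j\le(\sum_j A_{i,j})v^\ast\le v^\ast$ must be an equality throughout, which forces (i) $\sum_{j\in\mathcal{V}_R}A_{i,j}=1$, so no mass leaks out of the regular block at $i$, and (ii) $A_{i,j}>0\Rightarrow j\in\mathcal{M}$. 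Reading (i) against $\sum_{j\in\mathcal{N}_i}w_{i,j}=1$ forces $\lambda_i=0$ and $\sum_{j\in\mathcal{N}_i\cap\mathcal{V}_S}w_{i,j}=0$, i.e.\ node $i$ has zero self-bias and no stubborn in-neighbor, while (ii) says every regular agent influencing $i$ already lies in $\mathcal{M}$. Hence no edge enters $\mathcal{M}$ from outside $\mathcal{M}$, so no directed influence path from a stubborn agent can reach any vertex of $\mathcal{M}$ --- contradicting the hypothesis that stubborn influence reaches every regular agent. Therefore $\rho(\mathbf{A})<1$ and the dynamics is stable.

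Two points would need to be pinned down carefully. The first is the precise reading of the (garbled) path hypothesis, which I interpret as ``there is a directed influence path from some stubborn agent to every regular agent $i\in\mathcal{V}_R$,'' together with the edge-orientation bookkeeping, since $A_{i,j}>0$ encodes that $j$ influences $i$, i.e.\ $(j,i)\in\mathcal{E}$; keeping this orientation consistent throughout the contradiction argument is the most likely place for a slip. The second is a purely notational check that the chosen symbol $\mathcal{M}$ does not clash with the subject set $\mathcal{S}$. If the eigenvector argument is judged less transparent, an equivalent route is to show that the reachability hypothesis makes every row sum of some power $\mathbf{A}^m$ strictly less than one, whence $\rho(\mathbf{A})\le\|\mathbf{A}^m\|_\infty^{1/m}<1$.
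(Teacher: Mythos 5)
Your proof is correct in its essentials, but it is worth noting that the paper does not actually prove this theorem: its ``proof'' consists solely of a citation to an external tutorial (Proskurnikov and Tempo). What you have written is, in effect, the standard argument that underlies that citation, supplied in full: reduce via the Kronecker spectrum identity to $\rho(\mathbf{A})<1$, observe that $\mathbf{A}$ is nonnegative and substochastic, and then upgrade $\rho(\mathbf{A})\le 1$ to a strict inequality by a Perron--Frobenius contradiction on the set $\mathcal{M}$ of maximal entries of a nonnegative eigenvector. Each step is sound: the equality chain at a maximal index correctly forces both unit row sum within the regular block (hence $\lambda_i=0$ and zero total weight on stubborn in-neighbors) and closure of $\mathcal{M}$ under positive-weight in-neighbors, which contradicts reachability from the stubborn set. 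This is strictly more informative than the paper; indeed, the paper's own proof of its \emph{second} theorem asserts that the row sums of $\mathbf{A}$ are ``less than $1$'' and concludes $\rho(\mathbf{A})<1$ directly, which is not rigorous (some rows of a substochastic matrix may sum exactly to $1$), whereas your argument handles exactly that case.

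Two caveats, both of which you partly anticipate. First, the graph hypothesis speaks of paths in $\mathcal{G}$, i.e.\ of membership $j\in\mathcal{N}_i$, while your contradiction needs \emph{positive-weight} edges, since the paper only requires $w_{i,j}\ge 0$; you should either assume $w_{i,j}>0$ for $j\in\mathcal{N}_i$ or restate the hypothesis as reachability along positive-weight edges. Second, equation \eqref{networkopiniondynamics} is written with a time-invariant $\mathbf{A}$, so your Schur-stability reading is fair, but since the paper elsewhere insists the weights and biases are time-varying, a fully faithful statement would require a uniform bound $\rho(\mathbf{A}(k))\le\sigma<1$ (or a product-of-substochastic-matrices argument) rather than a single spectral-radius computation. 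Neither caveat reflects a gap in your reasoning so much as ambiguity in the (garbled) theorem statement itself.
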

\begin{proof}
    See the proof in \cite{proskurnikov2017tutorial}.
\end{proof}

\begin{theorem}
    If graph $\mathcal{G}$ is defined such that there exists at least one path from every stubborn agent to every agent, $i\in \mathcal{V}_R$,  then
    \begin{equation}
        \mathbf{D}=-\mathbf{I}_n+\mathbf{A}
    \end{equation}
    is Hurwitz,
    \begin{equation}
        \mathbf{C}=-\mathbf{D}^{-1}\mathbf{B}
    \end{equation}
    is one-sum row and non-negative.
\end{theorem}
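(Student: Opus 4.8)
The plan is to derive every assertion from two structural properties of $\mathbf{L}=\begin{bmatrix}\mathbf{A}&\mathbf{B}\end{bmatrix}$: it is entrywise non-negative and row-stochastic. Splitting the columns of $\mathbf{W}$ as $\mathbf{W}=\begin{bmatrix}\mathbf{W}_R&\mathbf{W}_S\end{bmatrix}$ (regular and stubborn columns), the partition gives $\mathbf{A}=(\mathbf{I}_{N_R}-\mathbf{\Lambda})\mathbf{W}_R$ and $\mathbf{B}=\begin{bmatrix}(\mathbf{I}_{N_R}-\mathbf{\Lambda})\mathbf{W}_S&\mathbf{\Lambda}\end{bmatrix}$, which are non-negative because $w_{i,j}\ge 0$ and $\lambda_i\in[0,1]$. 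Summing row $i$ of $\mathbf{L}$, the $(\mathbf{I}_{N_R}-\mathbf{\Lambda})\mathbf{W}$ block contributes $(1-\lambda_i)\sum_j w_{i,j}=1-\lambda_i$ and the $\mathbf{\Lambda}$ block contributes $\lambda_i$, so every row of $\mathbf{L}$ sums to one. In block form this reads $\mathbf{A}\mathbf{1}_{N_R}+\mathbf{B}\mathbf{1}_{N}=\mathbf{1}_{N_R}$, equivalently $\mathbf{B}\mathbf{1}_N=(\mathbf{I}_{N_R}-\mathbf{A})\mathbf{1}_{N_R}$, an identity I will reuse for the row-sum claim on $\mathbf{C}$.

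For the Hurwitz claim (reading $\mathbf{D}=-\mathbf{I}_{N_R}+\mathbf{A}$, the only dimensionally consistent interpretation of $-\mathbf{I}_n+\mathbf{A}$), the crux is $\rho(\mathbf{A})<1$; then each eigenvalue $\mu$ of $\mathbf{A}$ obeys $\mathrm{Re}(\mu)\le|\mu|<1$, so the eigenvalues $\mu-1$ of $\mathbf{D}$ lie in the open left half-plane. The quickest route is to invoke the preceding stability result, since asymptotic stability of the discrete system \eqref{networkopiniondynamics} is exactly $\rho(\mathbf{I}_n\otimes\mathbf{A})=\rho(\mathbf{A})<1$. If a self-contained argument is preferred, I would argue from substochasticity: $\mathbf{A}\ge 0$ has row sums $(1-\lambda_i)\sum_{j\in\mathcal{V}_R}w_{i,j}\le 1$, strict exactly at the regular agents carrying a bias or a direct stubborn influence. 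The assumption that every stubborn agent reaches every regular agent along a directed influence path guarantees that, in the (reverse-oriented) graph induced by $\mathbf{A}$, every index reaches such a deficient row, and the standard Perron--Frobenius criterion for substochastic matrices then forces $\rho(\mathbf{A})<1$. I expect this connectivity-to-spectral-radius step to be the main obstacle, because it requires tracking edge orientation carefully (the graph of $\mathbf{A}$ is the transpose of the influence graph) and correctly identifying the deficient rows as those adjacent to stubborn agents.

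Once $\rho(\mathbf{A})<1$ is established the remaining claims are routine. Invertibility of $\mathbf{D}$, hence of $\mathbf{I}_{N_R}-\mathbf{A}$, is immediate since $0$ is not an eigenvalue of a Hurwitz matrix, so $\mathbf{C}=-\mathbf{D}^{-1}\mathbf{B}=(\mathbf{I}_{N_R}-\mathbf{A})^{-1}\mathbf{B}$ is well defined. For non-negativity I will use the Neumann series $(\mathbf{I}_{N_R}-\mathbf{A})^{-1}=\sum_{k\ge 0}\mathbf{A}^{k}$, which converges because $\rho(\mathbf{A})<1$ and is entrywise non-negative because $\mathbf{A}\ge 0$; multiplying by $\mathbf{B}\ge 0$ yields $\mathbf{C}\ge 0$. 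Finally, applying $\mathbf{C}$ to $\mathbf{1}_N$ and using the identity from the first paragraph gives $\mathbf{C}\mathbf{1}_N=(\mathbf{I}_{N_R}-\mathbf{A})^{-1}\mathbf{B}\mathbf{1}_N=(\mathbf{I}_{N_R}-\mathbf{A})^{-1}(\mathbf{I}_{N_R}-\mathbf{A})\mathbf{1}_{N_R}=\mathbf{1}_{N_R}$, so $\mathbf{C}$ is row-stochastic, completing the argument.
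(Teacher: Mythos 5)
Your proof is correct, and it diverges from the paper's in ways worth noting. For the Hurwitz part the paper simply asserts that the row sums of $\mathbf{A}$ are less than one and hence $\rho(\mathbf{A})<1$, placing the eigenvalues of $\mathbf{D}$ in a disk of radius $r<1$ centered at $-1$; you correctly identify that the row sums of $\mathbf{A}=(\mathbf{I}_{N_R}-\mathbf{\Lambda})\mathbf{W}_R$ are only $\le 1$ in general (deficient exactly at agents with positive bias or a stubborn in-neighbor), so the connectivity hypothesis is genuinely needed, via the substochastic Perron--Frobenius argument you sketch -- this is the one step you leave as a plan rather than a completed argument, but it is standard and your reading of the edge orientation is right. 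For the row-sum claim the two arguments are the same identity in different clothing: the paper row-reduces $\mathbf{Y}=\begin{bmatrix}\mathbf{D}&\mathbf{B}\end{bmatrix}$ and uses that elementary row operations preserve the zero-row-sum property (i.e.\ preserve $\mathbf{Y}\mathbf{1}=\mathbf{0}$), while you apply $\mathbf{B}\mathbf{1}_N=(\mathbf{I}_{N_R}-\mathbf{A})\mathbf{1}_{N_R}$ directly; yours is shorter and avoids any appeal to elimination. The real difference is non-negativity: the paper argues that because $\mathbf{D}$ has $-1$ diagonal and non-negative off-diagonal entries, elimination converts $\mathbf{Y}$ to $\begin{bmatrix}-\mathbf{I}&\mathbf{Z}\end{bmatrix}$ with $\mathbf{Z}$ non-positive -- essentially an unproved M-matrix sign-pattern claim -- whereas your Neumann series $(\mathbf{I}_{N_R}-\mathbf{A})^{-1}=\sum_{k\ge 0}\mathbf{A}^k\ge 0$ is the standard rigorous route and makes transparent why $\rho(\mathbf{A})<1$ is the load-bearing fact for all three conclusions. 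Your observation that $-\mathbf{I}_n+\mathbf{A}$ must be read as $-\mathbf{I}_{N_R}+\mathbf{A}$ is also correct.
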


\begin{proof}
    The matrix $\mathbf{A}\in \mathbb{R}^{N_R\times N_R}$ is nonnegative where the sum of the entries $\mathbf{A}(k)$ is less than $1$. Therefore, the spectral radius of $\mathbf{A}(k)$ is indicated by $r$ and is less than $1$ at every discrete time $k$. As a result, the eigenvalues of $\mathbf{D}\in \mathbb{R}^{N_R\times N_R}$ are located inside a disk of radius, $r<1$ which is centered at $-1+0\mathbf{j}$. This implies that matrix $\mathbf{D}$ is Hurwitz at every discrete time $k$.

    To prove that $\mathbf{C}$ is a one-sum row, we define the matrix $\mathbf{Y}=\begin{bmatrix}         \mathbf{D}&\mathbf{B}     \end{bmatrix}$ where every row of $\mathbf{Y}$ sums up to $0$. Applying the Gaussian Jordan elimination approach, we can convert $\mathbf{D}$ to $\mathbf{I}_{Nn}$, since $\mathbf{D}$ is invertible. Therefore, applying the Gaussian Jordan elimination approach converts $\mathbf{Y}$ to $\mathbf{Y}'=\begin{bmatrix}
        \mathbf{I}&-\mathbf{D}^{-1}\mathbf{B}
    \end{bmatrix}$ by using row algebraic operations. Applying the row algebraic operations does not change sum of the rows of $\mathbf{Y}$, therefore,   every row of $\mathbf{Y}$ and $\mathbf{Y}'$ sums up to $0$. This implies that every row of the matrix $\mathbf{D}$ sums up to $1$.

    The diagonal elements of $\mathbf{D}$ are all $-1$ and its off-diagonal elements are all non-zero. Therefore, $\mathbf{Y}$ can be converted to $\mathbf{Y}''=\begin{bmatrix}
        -\mathbf{I}&\mathbf{Z}
    \end{bmatrix}$ by applying row-algebraic operations, where $\mathbf{Z}$ is non-positive. Because $\mathbf{Y}'=-\mathbf{Y}''$, we conclude that $-\mathbf{Z}=-\mathbf{D}^{-1}B$ is non-negative. 


\end{proof}

The equilibrium of dynamics is achieved when $\mathbf{x}(k)$ converges to a constant vector $\mathbf{x}^*$. Under this situation $\mathbf{x}(k)$ and $\mathbf{x}(k+1)$ can be substituted by $\mathbf{x}^*$, and as the results, $\mathbf{x}^*$ is obtained by
\begin{equation}
    \mathbf{x}^*=\mathbf{C}\mathbf{u}.
\end{equation}

\subsection{Opinion Evolution under Reducible Network}\label{Opinion Evolution under Reducible Network}
In this section, we model opinion evolution in situations where agents can be divided into $M$ distinct groups and influences between the regular agent groups are arranged in a horizontal order. Under these assumptions, the group identification numbers are defined by the set $\mathcal{M}=\left\{1,\cdots,M\right\}$. Formally speaking, set $\mathcal{V}_R$ can be expressed as 
\begin{equation}
    \mathcal{V}_R=\bigcup_{l\in \mathcal{M}}\mathcal{V}_l,
\end{equation}
where $\mathcal{V}_l$ is a disjoint subset of $\mathcal{V}$ for every $l\in \mathcal{M}$. 
Given $\mathcal{V}_0$ through $\mathcal{V}_M$, we define
\begin{equation}\label{Wl}
    \mathcal{W}_l=\begin{cases}
        \mathcal{V}_S\cup\mathcal{V}_l&l=1\\
        \mathcal{W}_{l-1}\cup\mathcal{V}_l&l\in \mathcal{M}\setminus \left\{1\right\}
    \end{cases}
    ,\qquad \forall l\in \mathcal{M}.
\end{equation}

\begin{assumption}\label{assumcom}
    We assume that 
    \begin{equation}
        \left(\bigwedge_{l\in \mathcal{M}}\bigwedge_{i\in \mathcal{V}_{l}}\left(\mathcal{N}_i\subset \mathcal{W}_{l}\right)\right)\vee\left(\bigwedge_{l\in \mathcal{M}\setminus \left\{1,M\right\}}\bigwedge_{i\in \mathcal{V}_{l}}\left(\mathcal{N}_i\subset \mathcal{W}_{l-1}\right)\right)
    \end{equation}
\end{assumption}
Assumption \ref{assumcom} implies that the opinion of the agent $i\in \mathcal{V}_l$ can only be influenced by the agents of $\mathcal{W}_r$' if $r\leq l$, when $l\in \mathcal{M}$, or $r<l$, depending on $l\in \mathcal{M}\setminus \left\{1,M\right\}$. The latter condition holds when 

\begin{definition}\label{OrderNumberSet}
Assuming $N_l=\left|\mathcal{V}_l\right|$, for every $l\in \mathcal{M}$, {\color{black}$\mathcal{O}_l$} assigns a unique order number to every $\mathcal{V}_l$'s agent. The order number of the agent $i\in \mathcal{V}_l$ is denoted by $o_i$, where $o_i=\mathcal{O}_l(i)$, for every $l\in \mathcal{M}$.
\end{definition}
\vspace{-.25cm}
\begin{definition}
    We define 
    $
    \mathbf{Q}_l=\left[Q_{i,j}^l\right]\in \mathbb{R}^{N_l\times N_R}
    $ as a transformation matrix 
    with $(i,j)$ entry 
    \vspace{-0.2cm}
    \begin{equation}\label{Ql}
       Q_{i,j}^l=\begin{cases}
            1&j=\mathcal{O}_l(i),~i\in \mathcal{V}_l\\
            0&\mathrm{else}
        \end{cases}
        ,\qquad l\in \mathcal{M}.
    \end{equation}
\end{definition}
We note that
\vspace{-0.25cm}
\begin{equation}\label{identityidentity}
    \mathbf{Q}_l\mathbf{Q}_l^T=\mathbf{I}_{N_l},\qquad \forall l\in \mathcal{M},
\end{equation}
where $\mathbf{I}_{N_l}\in \mathbb{R}^{N_l\times N_l}$ is the identity matrix.

\begin{definition}
    We define
\begin{equation}
    \mathbf{Q}=\begin{bmatrix}
                  \mathbf{Q}_1\\\vdots\\\mathbf{Q}_M
    \end{bmatrix}
    \in \mathbb{R}^{N_R\times N_R}
\end{equation}
as a transformation matrix.
\end{definition}
Note that 
\begin{equation}
    \mathbf{Q}\mathbf{Q}^T= \mathbf{Q}^T\mathbf{Q}=\mathbf{I}_{N_R}.
\end{equation}
\begin{definition}
    We define 
    \vspace{-0.2cm}
\begin{equation}
    \bar{\mathbf{x}}_l(k)=\left(\mathbf{I}_n\otimes \mathbf{Q}_l\right)\mathbf{x}(k)\in \mathbb{R}^{nN_l\times1},\qquad l\in \mathcal{M},
\end{equation}
as the vector aggregating opinions of all agents defined by $\mathcal{V}_l$.
\end{definition}
We can partition $\mathbf{L}\in \mathbb{R}^{N_R\times \left(N+N_R\right)}$ as follows:
\begin{equation}
    \mathbf{L}=\begin{bmatrix}
        \mathbf{A}&\mathbf{S}&\mathbf{\Lambda}
    \end{bmatrix}
\end{equation}
where $\mathbf{A}\in \mathbb{R}^{N_R\times N_R}$ defines influences among the regular agents and $\mathbf{S}\in \mathbb{R}^{N_R\times \left(N-N_R\right)}$ aggregates influences of the stubborn agents on regular agents. 

\begin{definition}
    The paper defines
    \begin{equation}
        \bar{\mathbf{A}}_{pq}=\mathbf{Q}_p\mathbf{A}\mathbf{Q}_q^T\in \mathbb{R}^{N_p\times N_q},\qquad p,q\in \mathcal{M}
    \end{equation}
\end{definition}

\begin{definition}
    The paper defines
    \begin{equation}
        \bar{\mathbf{\Lambda}}_{pq}=\mathbf{Q}_p\mathbf{\Lambda}\mathbf{Q}_q^T\in \mathbb{R}^{N_p\times N_q},\qquad p,q\in \mathcal{M}.
    \end{equation}
\end{definition}
Notice that $\bar{\mathbf{\Lambda}}_{pq}$ is a positive definite and diagonal matrix, if $p=q$ and $p\in \mathcal{M}$. Also, $\bar{\mathbf{\Lambda}}_{pq}\in\mathbf{0}_{N_p\times N_q} \mathbb{R}^{N_p\times N_q}$, if $p\neq q$ and $p,q\in \mathcal{M}$.

\begin{definition}
    The paper defines 
\begin{equation}
\begin{split}
     \bar{\mathbf{u}}=&\left(\mathbf{I}_n\otimes\mathbf{H}\right)\begin{bmatrix}\mathbf{u}_A^T&\mathbf{u}_S^T\end{bmatrix}^T\in \mathbb{R}^{Nn\times 1}
\end{split}   
\end{equation}
where
\begin{equation}
    \mathbf{u}_S=\mathrm{vec} \left(\begin{bmatrix}
        \mathbf{o}_{N_R+1}(0)&\cdots&\mathbf{o}_N(0)
    \end{bmatrix}^T\right)\in \mathbb{R}^{n\left(N-N_R\right)\times 1},
\end{equation}
\begin{equation}
    \mathbf{u}_A=\mathrm{vec}\left(
    \begin{bmatrix}
        \mathbf{o}_{1}(0)&\cdots&\mathbf{o}_{N_R}(0)
    \end{bmatrix}^T\right),
\end{equation}
\begin{equation}
\mathbf{H}=
    \begin{bmatrix}
         \mathbf{I}_{N-N_R}&\mathbf{0}_{\left(N-N_R\right)\times N_R}\\\mathbf{0}_{ N_R\times \left(N-N_R\right)}&\mathbf{Q}
     \end{bmatrix}
     .
\end{equation}
\end{definition}
Note that $\mathbf{H}^T\mathbf{H}=\mathbf{H}\mathbf{H}^T=\mathbf{I}_N$.

\begin{theorem}
If the regular agent $i\in \mathcal{V}_l$ is solely influenced by $\mathcal{W}_l$, for every $l\in \mathcal{M}$, then the communication matrix $\mathbf{W}$ is reducible and network opinion dynamics is obtained by
\begin{equation}\label{convertednetwork}
    \bar{\mathbf{x}}(k+1)=\left(\mathbf{I}_n\otimes \bar{\mathbf{A}}\right)\bar{\mathbf{x}}(k)+\left(\mathbf{I}_n\otimes \begin{bmatrix}
        \bar{\mathbf{S}}&\mathbf{\bar{\Lambda}}
    \end{bmatrix}\right)\bar{\mathbf{u}}.
\end{equation}
where 
\begin{equation}\label{sb}
    \bar{\mathbf{A}}=\mathbf{Q}\mathbf{A}\mathbf{Q}^T,
\end{equation}
\begin{equation}\label{rb}
    \bar{\mathbf{S}}=\mathbf{Q}\mathbf{S},
\end{equation}
\begin{equation}\label{lb}
    \bar{\mathbf{\Lambda}}=\mathbf{Q}\mathbf{\Lambda}\mathbf{Q}^T.
\end{equation}

\end{theorem}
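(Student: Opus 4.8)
The plan is to prove the two assertions separately: first that $\mathbf{W}$ (equivalently $\mathbf{A}$) is reducible, and then that the orthogonal change of coordinates $\bar{\mathbf{x}}=(\mathbf{I}_n\otimes\mathbf{Q})\mathbf{x}$ turns \eqref{networkopiniondynamics} into \eqref{convertednetwork}. The starting observation is that each $\mathbf{Q}_l$ selects the rows of the regular-agent block belonging to $\mathcal{V}_l$, so stacking them produces a $0/1$ matrix $\mathbf{Q}$ with exactly one unit entry per row and, by $\mathbf{Q}\mathbf{Q}^T=\mathbf{Q}^T\mathbf{Q}=\mathbf{I}_{N_R}$, per column; hence $\mathbf{Q}$ is precisely the permutation that reorders the regular agents group by group in increasing order of $l$.

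For reducibility, I would translate the hypothesis $\mathcal{N}_i\subset\mathcal{W}_l$ for $i\in\mathcal{V}_l$ (the first clause of Assumption \ref{assumcom}) into a statement about the entries of $\mathbf{A}$. Since $\mathcal{W}_l=\mathcal{V}_S\cup\bigcup_{r\le l}\mathcal{V}_r$, no agent of group $l$ is influenced by a regular agent of a strictly higher group, so $W_{i,j}=0$, and therefore $A_{i,j}=(1-\lambda_i)w_{i,j}=0$, whenever $i\in\mathcal{V}_p$, $j\in\mathcal{V}_q$ with $q>p$. Conjugating by the reordering permutation, the $(p,q)$ block of $\bar{\mathbf{A}}=\mathbf{Q}\mathbf{A}\mathbf{Q}^T$ equals $\bar{\mathbf{A}}_{pq}=\mathbf{Q}_p\mathbf{A}\mathbf{Q}_q^T=\mathbf{0}$ for every $q>p$. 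Thus $\bar{\mathbf{A}}$ is block lower triangular with $M\ge 2$ diagonal blocks, which is exactly the statement that $\mathbf{A}$ (hence $\mathbf{W}$) is reducible.

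For the dynamics, I would premultiply \eqref{networkopiniondynamics}, written with $\mathbf{B}=\begin{bmatrix}\mathbf{S}&\mathbf{\Lambda}\end{bmatrix}$, by $(\mathbf{I}_n\otimes\mathbf{Q})$ and set $\bar{\mathbf{x}}=(\mathbf{I}_n\otimes\mathbf{Q})\mathbf{x}$. Using the Kronecker mixed-product rule and inserting the identity $\mathbf{Q}^T\mathbf{Q}=\mathbf{I}_{N_R}$ between $\mathbf{A}$ and $\mathbf{x}$,
\[
(\mathbf{I}_n\otimes\mathbf{Q})(\mathbf{I}_n\otimes\mathbf{A})\mathbf{x}=(\mathbf{I}_n\otimes\mathbf{Q}\mathbf{A}\mathbf{Q}^T)(\mathbf{I}_n\otimes\mathbf{Q})\mathbf{x}=(\mathbf{I}_n\otimes\bar{\mathbf{A}})\bar{\mathbf{x}},
\]
which recovers the state term with $\bar{\mathbf{A}}$ as in \eqref{sb}. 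The same manipulation applied to $\mathbf{S}$ and $\mathbf{\Lambda}$ produces $\bar{\mathbf{S}}=\mathbf{Q}\mathbf{S}$ and $\bar{\mathbf{\Lambda}}=\mathbf{Q}\mathbf{\Lambda}\mathbf{Q}^T$ as in \eqref{rb}--\eqref{lb} (for the $\mathbf{\Lambda}$ term one again inserts $\mathbf{Q}^T\mathbf{Q}$ so that the permutation is absorbed into the transformed input), and collecting the two input contributions into the single block $\begin{bmatrix}\bar{\mathbf{S}}&\bar{\mathbf{\Lambda}}\end{bmatrix}$ acting on the reorganized input $\bar{\mathbf{u}}$ yields \eqref{convertednetwork}.

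The routine parts are the Kronecker algebra and the orthogonality identities, which are already recorded in the excerpt. The step needing the most care is the input bookkeeping: I must check that assembling $\mathbf{u}_A$ and $\mathbf{u}_S$ through the block-diagonal $\mathbf{H}$, which applies the permutation $\mathbf{Q}$ to the regular initial opinions while leaving the stubborn opinions untouched, reproduces exactly $(\mathbf{I}_n\otimes\bar{\mathbf{S}})$ acting on the stubborn block together with $(\mathbf{I}_n\otimes\bar{\mathbf{\Lambda}})(\mathbf{I}_n\otimes\mathbf{Q})$ acting on $\mathbf{u}_A$, so that the ordering of $\bar{\mathbf{u}}$ is consistent with the partition $\begin{bmatrix}\bar{\mathbf{S}}&\bar{\mathbf{\Lambda}}\end{bmatrix}$. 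This is the only place where an ordering slip could occur, so I would verify it componentwise before declaring the two systems identical.
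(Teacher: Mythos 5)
Your proposal is correct and the coordinate-change part follows essentially the same route as the paper's proof: premultiply the network dynamics by $\mathbf{I}_n\otimes\mathbf{Q}$, insert $\mathbf{Q}^T\mathbf{Q}=\mathbf{I}_{N_R}$ and $\mathbf{H}^T\mathbf{H}=\mathbf{I}_N$, and read off $\bar{\mathbf{A}}=\mathbf{Q}\mathbf{A}\mathbf{Q}^T$ and $\begin{bmatrix}\bar{\mathbf{S}}&\bar{\mathbf{\Lambda}}\end{bmatrix}=\mathbf{Q}\begin{bmatrix}\mathbf{S}&\mathbf{\Lambda}\end{bmatrix}\mathbf{H}^T$. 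The one genuine difference is that you also prove the reducibility claim (via $A_{i,j}=0$ for $i\in\mathcal{V}_p$, $j\in\mathcal{V}_q$, $q>p$, giving a block lower-triangular $\bar{\mathbf{A}}$), which the paper's proof of this theorem omits entirely and only establishes later in Theorem \ref{thm4}; your version is therefore the more complete argument for the statement as written.
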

\begin{proof}
    When every regular agent updates its opinion by Eq. \eqref{opinionevolutionindividual}, the network opinion dynamics is obtained by
    \begin{equation}\label{proof1}
        \mathbf{x}\left(k+1\right)=\left(\mathbf{I}_n\otimes \mathbf{A}\right)\mathbf{x}\left(k\right)+\left(\mathbf{I}_n\otimes\begin{bmatrix}
            \mathbf{S}&\mathbf{\mathbf{\Lambda}}
        \end{bmatrix} \right)\mathbf{u}
    \end{equation}
    Let both sides of  Eq. \eqref{proof1} be pre-multiplied by $\mathbf{I}_n\otimes \mathbf{Q}$, and $\mathbf{x}\left(k\right)$ and $\mathbf{u}$ be substituted by the following terms:
    \[    \mathbf{x}\left(k\right)=\left(\mathbf{I}_n\otimes\mathbf{Q}^T\right)\left(\mathbf{I}_n\otimes\mathbf{Q}\right)\mathbf{x}\left(k\right)=\left(\mathbf{I}_n\otimes\mathbf{Q}^T\right)\bar{\mathbf{x}}\left(k\right)
    \]
    \[    \mathbf{u}=\left(\mathbf{I}_n\otimes\mathbf{H}^T\right)\left(\mathbf{I}_n\otimes\mathbf{H}\right)\mathbf{u}=\left(\mathbf{I}_n\otimes\mathbf{H}^T\right)\bar{\mathbf{u}}
    \]
    Then, Eq. \eqref{proof1} is converted to
    \begin{equation}\label{proof10}
        \bar{\mathbf{x}}\left(k+1\right)=\left(\mathbf{I}_n\otimes \left(\mathbf{Q}\mathbf{A}\mathbf{Q}^T\right)\right)\bar{\mathbf{x}}\left(k\right)+\left(\mathbf{I}_n\otimes \left(\mathbf{Q}\begin{bmatrix}
            \mathbf{S}&\mathbf{\Lambda}
        \end{bmatrix}\mathbf{H}^T\right)\right)\bar{\mathbf{u}}
        .
    \end{equation}
    Per Eqs. \eqref{sb}, \eqref{rb}, and \eqref{lb}, we can replace $\bar{\mathbf{A}}=\mathbf{Q}\mathbf{A}\mathbf{Q}^T$ and $\begin{bmatrix}
            \bar{\mathbf{S}}&\bar{\mathbf{\Lambda}}
        \end{bmatrix}=\mathbf{Q}\begin{bmatrix}
            \mathbf{S}&\mathbf{\Lambda}
        \end{bmatrix}\mathbf{H}^T$ into Eq. \eqref{proof10}. Thus, we obtain the network opinion dynamics in the form of Eq. \eqref{convertednetwork}.
\end{proof}
\begin{theorem}\label{thm4}
   Assume agents who influence of every $i\in \mathcal{V}_R$ are  defined by $\mathcal{N}_i$ and assigned such that 
   \begin{equation}\label{assum39}
        \left(\bigwedge_{l\in \mathcal{M}}\bigwedge_{i\in \mathcal{V}_{l}}\left(\mathcal{N}_i\subset \mathcal{W}_{l}\right)\right).
    \end{equation}
   Then, matrix  $\bar{\mathbf{A}}\in \mathbb{R}^{N_R\times N_R}$ is reducible and obtained by
    \begin{equation}
        \bar{\mathbf{A}}=\begin{bmatrix}
            \bar{\mathbf{A}}_{11}&\cdots&\mathbf{0}\\
            \vdots&\ddots&\vdots\\
            \bar{\mathbf{A}}_{M1}&\cdots&\bar{\mathbf{A}}_{MM}\\
        \end{bmatrix}
        ,
    \end{equation}
    \begin{equation}\label{barS}
        \bar{\mathbf{S}}=\begin{bmatrix}
            \mathbf{Q}_1\mathbf{S}\\
            \mathbf{0}_{\left(N_R-N_1\right)\times \left(N-N_R\right)}
        \end{bmatrix}
        \in\mathbb{R}^{N_R\times \left(N-N_R\right)}.
    \end{equation}
\end{theorem}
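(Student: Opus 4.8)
The plan is to recognize that $\mathbf{Q}_p\mathbf{A}\mathbf{Q}_q^T$ is nothing more than the submatrix of $\mathbf{A}$ obtained by keeping the rows indexed by $\mathcal{V}_p$ and the columns indexed by $\mathcal{V}_q$, and then to read off directly from the neighbourhood constraint \eqref{assum39} which of these submatrices are forced to vanish.

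First I would write out the block structure of $\bar{\mathbf{A}}$. Since $\mathbf{Q}=[\mathbf{Q}_1^T\ \cdots\ \mathbf{Q}_M^T]^T$,
\[
\bar{\mathbf{A}}=\mathbf{Q}\mathbf{A}\mathbf{Q}^{T}=\left[\mathbf{Q}_p\mathbf{A}\mathbf{Q}_q^{T}\right]_{p,q\in\mathcal{M}}=\left[\bar{\mathbf{A}}_{pq}\right]_{p,q\in\mathcal{M}},
\]
so the $(p,q)$ block of $\bar{\mathbf{A}}$ is exactly $\bar{\mathbf{A}}_{pq}$. Next I would invoke \eqref{Ql} and \eqref{identityidentity}: each $\mathbf{Q}_l$ is a $0/1$ matrix whose rows are distinct standard basis vectors, its $r$-th row selecting the $\mathcal{V}_l$ agent of order number $r$. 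Consequently the $(r,s)$ entry of $\bar{\mathbf{A}}_{pq}$ equals $A_{ij}$, where $i\in\mathcal{V}_p$ has order number $r$ and $j\in\mathcal{V}_q$ has order number $s$; and from the definition of the regular-to-regular block $\mathbf{A}$ of $\mathbf{L}$, $A_{ij}=(1-\lambda_i)w_{ij}$ when $j\in\mathcal{N}_i$ and $A_{ij}=0$ otherwise.

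The core step is then purely set-theoretic. Fix $p<q$ and any $i\in\mathcal{V}_p$. Assumption \eqref{assum39} gives $\mathcal{N}_i\subset\mathcal{W}_p=\mathcal{V}_S\cup\mathcal{V}_1\cup\cdots\cup\mathcal{V}_p$. Because the groups $\{\mathcal{V}_l\}$ are pairwise disjoint and $q>p$, we have $\mathcal{V}_q\cap\mathcal{W}_p=\emptyset$, so no $j\in\mathcal{V}_q$ lies in $\mathcal{N}_i$ and hence $A_{ij}=0$. Every entry of $\bar{\mathbf{A}}_{pq}$ therefore vanishes for $p<q$, which is precisely the claimed block lower-triangular form; since this leaves at least two nontrivial diagonal blocks and a zeroed upper part, $\bar{\mathbf{A}}$ is reducible.

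The same row-selection picture handles $\bar{\mathbf{S}}=\mathbf{Q}\mathbf{S}$: its $l$-th block row is $\mathbf{Q}_l\mathbf{S}$, namely the rows of $\mathbf{S}$ indexed by $\mathcal{V}_l$, and a row $i$ of $\mathbf{S}$ is nonzero only if $\mathcal{N}_i$ contains a stubborn agent. Thus \eqref{barS} is equivalent to the statement $\mathcal{N}_i\cap\mathcal{V}_S=\emptyset$ for every $i\notin\mathcal{V}_1$. This is the delicate point and the main obstacle, because \eqref{assum39} on its own only yields $\mathcal{N}_i\subset\mathcal{W}_l$ with $\mathcal{W}_l\supset\mathcal{V}_S$, which would still permit direct stubborn influence on higher groups. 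I would therefore make explicit the defining feature of the horizontally ordered cascade, namely that the leaders inject their opinion only at the first level $\mathcal{V}_1$, every higher group being reached solely through the regular chain $\mathcal{V}_1\to\mathcal{V}_2\to\cdots\to\mathcal{V}_M$, and from it conclude $\mathbf{Q}_l\mathbf{S}=\mathbf{0}$ for all $l\geq 2$, which gives \eqref{barS}.
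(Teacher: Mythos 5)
Your treatment of $\bar{\mathbf{A}}$ is essentially the paper's own proof: both arguments note that $\mathbf{Q}_p\mathbf{A}\mathbf{Q}_q^T$ extracts the rows of $\mathbf{A}$ indexed by $\mathcal{V}_p$ and the columns indexed by $\mathcal{V}_q$, and that for $p<q$ the hypothesis \eqref{assum39} together with the disjointness of the groups gives $\mathcal{N}_i\cap\mathcal{V}_q\subset\mathcal{W}_p\cap\mathcal{V}_q=\emptyset$ for $i\in\mathcal{V}_p$, hence $A_{ij}=0$ and $\bar{\mathbf{A}}_{pq}=\mathbf{0}_{N_p\times N_q}$. Where you diverge is on $\bar{\mathbf{S}}$, and the difficulty you flag is genuine: the paper's proof handles this part with the bare assertion that only $\mathcal{V}_1$'s agents access the opinions of the stubborn agents, which does not follow from \eqref{assum39}, since $\mathcal{V}_S\subset\mathcal{W}_l$ for every $l\in\mathcal{M}$ and therefore a regular agent in $\mathcal{V}_l$ with $l\geq 2$ is permitted by the stated hypothesis to have stubborn in-neighbors, making the corresponding row of $\mathbf{S}$ nonzero. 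So the gap you identify is present in the paper itself, not just in your reconstruction; the form \eqref{barS} needs the extra structural hypothesis you propose, namely $\mathcal{N}_i\cap\mathcal{V}_S=\emptyset$ for every $i\in\mathcal{V}_R\setminus\mathcal{V}_1$, stated explicitly alongside \eqref{assum39}. Making that assumption explicit, as you do, is the correct repair, and with it your argument for \eqref{barS} (row $i$ of $\mathbf{S}$ vanishes unless $\mathcal{N}_i$ meets $\mathcal{V}_S$) goes through.
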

    
\begin{proof}
     The agent $i\in \mathcal{V}_p\subset \mathcal{V}_R$ is not influenced by any agent $j\in \mathcal{V}_q\subset \mathcal{V}_R$ if $q>p$. This implies that $(i,j)$ entries of matrices $\mathbf{A}$ and $\mathbf{W}$, denoted by $A_{i,j}$ and $W_{i,j}$, respectively, are both zero, for every $i\in \mathcal{V}_p$ and every $j\in \mathcal{V}_q$. Therefore, the entries of every row of the matrix $\bar{A}_{pq}=\mathbf{Q}_p\mathbf{A}\mathbf{Q}_q^T$ are zero, which in turn implies that $\bar{\mathbf{A}}_{pq}=\mathbf{0}_{N_p\times N_q}$. Also, $\mathbf{Q}_l\mathbf{S}=\mathbf{0}_{N_l\times \left(N-N_R\right)}$ if $l\in \mathcal{M}\setminus \left\{1\right\}$ because only $V_1$'s agents access the opinions of the stubborn agents that are defined by $\mathcal{V}_S$. Therefore, $\bar{\mathbf{S}}=\mathbf{Q}\mathbf{S}$ simplifies to the expression in Eq. \eqref{barS}.
\end{proof}
Per Theorem \ref{thm4}, if condition \eqref{assum39} is satisfied, the opinion evolution dynamics simplify to
\begin{equation}
    \bar{\mathbf{x}}_l(k+1)=\begin{cases}        \bar{\mathbf{A}}_{11}\bar{\mathbf{x}}_1(k)+\bar{\mathbf{\Lambda}}_{11}\bar{\mathbf{x}}_1(0)+\bar{\mathbf{S}}_1\bar{\mathbf{u}}_S&l=1\in \mathcal{M}\\
        \sum_{h=1}^l\bar{\mathbf{A}}_{lh}\bar{\mathbf{x}}_h(k)+\bar{\mathbf{\Lambda}}_{ll}\bar{\mathbf{x}}_l(0)&l=\mathcal{M}\setminus \left\{1\right\}\\
    \end{cases}
    ,
\end{equation}
where 
\begin{equation}
    \bar{\mathbf{S}}_1=\mathbf{Q}_1\mathbf{S}.
\end{equation}
\begin{theorem}\label{DNNthm}
    Assume inter-agent influences are such that the in-neighbor set $\mathcal{N}_i$ satisfy the following condition:
    \begin{equation}\label{DNN}
        \left(\bigwedge_{i\in \mathcal{V}_1}\left(\mathcal{N}_i\in \mathcal{V}_S\right)\right)\wedge\left(\bigwedge_{l\in \mathcal{M}\setminus \left\{1\right\}}\bigwedge_{i\in \mathcal{V}_l}\left(\mathcal{N}_i\subset \mathcal{W}_{l-1}\right)\right),
    \end{equation}
    where $\mathcal{W}_l$ is defined by Eq. \eqref{Wl} for every $l\in \mathcal{M}$. 
    Then, the opinion of every regular agent $i$ converges to its final values in $M$ time steps, where $M$ is the number of layers of the DNN.
\end{theorem}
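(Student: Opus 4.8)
The plan is to exploit the \emph{strict} feedforward structure that condition \eqref{DNN} imposes, which is stronger than condition \eqref{assum39} underlying Theorem \ref{thm4}. First I would record the structural consequence of \eqref{DNN}: writing $\mathcal{W}_0:=\mathcal{V}_S$ to unify the two cases, every $i\in\mathcal{V}_l$ satisfies $\mathcal{N}_i\subset\mathcal{W}_{l-1}$, and since $\mathcal{W}_{l-1}$ contains no agent of $\mathcal{V}_l,\dots,\mathcal{V}_M$, no regular agent in layers $l$ through $M$ influences $i$. Reading this through $\bar{\mathbf{A}}_{pq}=\mathbf{Q}_p\mathbf{A}\mathbf{Q}_q^T$ exactly as in the proof of Theorem \ref{thm4}, I would conclude that not only the strictly-upper blocks vanish but also every diagonal block vanishes, i.e. $\bar{\mathbf{A}}_{lh}=\mathbf{0}$ whenever $h\ge l$. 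Hence $\bar{\mathbf{A}}$ is \emph{strictly} block lower triangular with $M$ diagonal blocks. This is the crucial point separating the DNN case from Theorem \ref{thm4}, where $\bar{\mathbf{A}}_{ll}$ may be nonzero.

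The second step converts this structure into finite-time convergence. Because $\bar{\mathbf{A}}$ is strictly block lower triangular, any product of blocks contributing to $\bar{\mathbf{A}}^k$ forces a strictly decreasing chain $p>r_1>\cdots$ of layer indices, so $\bar{\mathbf{A}}^M=\mathbf{0}$ and therefore $\left(\mathbf{I}_n\otimes\bar{\mathbf{A}}\right)^M=\mathbf{I}_n\otimes\bar{\mathbf{A}}^M=\mathbf{0}$. Writing the solution of \eqref{convertednetwork} as
\begin{equation*}
\bar{\mathbf{x}}(k)=\left(\mathbf{I}_n\otimes\bar{\mathbf{A}}\right)^k\bar{\mathbf{x}}(0)+\sum_{j=0}^{k-1}\left(\mathbf{I}_n\otimes\bar{\mathbf{A}}\right)^j\left(\mathbf{I}_n\otimes\begin{bmatrix}\bar{\mathbf{S}}&\bar{\mathbf{\Lambda}}\end{bmatrix}\right)\bar{\mathbf{u}},
\end{equation*}
the transient term is annihilated for $k\ge M$ and the remaining sum acquires no new terms, so $\bar{\mathbf{x}}(k)=\bar{\mathbf{x}}(M)=\bar{\mathbf{x}}^*$ for every $k\ge M$, which is precisely convergence in $M$ steps.

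I would also present the equivalent layer-wise induction, which makes the count transparent and is the cleaner narrative (mirroring a forward pass through an $M$-layer network). The base case uses that layer $1$ sees only the constant stubborn opinions, since $\bar{\mathbf{A}}_{1h}=\mathbf{0}$ for all $h$, so $\bar{\mathbf{x}}_1(k)=\bar{\mathbf{\Lambda}}_{11}\bar{\mathbf{x}}_1(0)+\bar{\mathbf{S}}_1\bar{\mathbf{u}}_S$ is already the final value for every $k\ge 1$. For the inductive step, supposing layers $1,\dots,l-1$ have frozen at their final values by step $l-1$, the update of $\bar{\mathbf{x}}_l$ depends only on $\bar{\mathbf{x}}_1(k),\dots,\bar{\mathbf{x}}_{l-1}(k)$ through $\bar{\mathbf{A}}_{lh}$ with $h\le l-1$, together with the constant bias and stubborn-influence terms; its right-hand side is therefore constant for $k\ge l-1$, so $\bar{\mathbf{x}}_l$ freezes by step $l$. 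Taking $l=M$ yields the claim.

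The main obstacle is less the algebra than two points of care. First, I must firmly justify the vanishing of the diagonal blocks $\bar{\mathbf{A}}_{ll}$ --- this is exactly what \eqref{DNN} buys over \eqref{assum39}, and it is what makes $\bar{\mathbf{A}}$ nilpotent rather than merely block-triangular. Second, the count is off-by-one sensitive: I would verify that layer $l$ settles at step $l$ (not $l-1$ or $l+1$) so that the deepest layer $M$ settles at exactly step $M$. Finally, for the phrase ``final values'' to be meaningful I would make explicit the standing assumption that the weights $w_{i,j}$ and biases $\lambda_i$ have settled to constants, since otherwise even layer $1$ need not be constant; under that assumption the $M$-step bound is exact.
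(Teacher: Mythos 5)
Your proof is correct and follows essentially the same route as the paper's: condition \eqref{DNN} forces the diagonal blocks $\bar{\mathbf{A}}_{ll}$ to vanish so that $\bar{\mathbf{A}}$ is strictly block lower triangular, and the layer-wise induction in your third paragraph (layer $1$ frozen at $k=1$, layer $l$ frozen one step after layer $l-1$) is precisely the paper's argument. Your added nilpotency computation $\bar{\mathbf{A}}^M=\mathbf{0}$ with the explicit solution formula, the off-by-one check, and the caveat that the weights $w_{i,j}$ and biases $\lambda_i$ must be constant for ``final values'' to be well defined all tighten steps the paper leaves informal, but they do not change the underlying approach.
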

\begin{proof}
If the theorem's assumption is satisfied, then the diagonal blocks of matrix $\bar{\mathbf{A}}\in \mathbb{R}^{N_R\times N_R}$ are all zero which in turn implies that eigenvalues of $\bar{\mathbf{A}}$ are  all zero, and as the result,
the opinion of the layer $l\in \mathcal{M}$ is updated by
\begin{equation}\label{bad}
    \bar{\mathbf{x}}_l(k+1)=\begin{cases}        \bar{\mathbf{\Lambda}}_{11}\bar{\mathbf{x}}_1(0)+\bar{\mathbf{S}}_1\bar{\mathbf{u}}_S&l=1\in \mathcal{M}\\
        \sum_{h=1}^{l-1}\bar{\mathbf{A}}_{lh}\bar{\mathbf{x}}_h(k)+\bar{\mathbf{\Lambda}}_{ll}\bar{\mathbf{x}}_l(0)&l=\mathcal{M}\setminus \left\{1\right\}\\
    \end{cases}
    .
\end{equation}
In Eq. \eqref{bad}, $\bar{\mathbf{x}}_l$ converges to its final value after $\bar{\mathbf{x}}_{l-1}$ converging to the  final opinion vector, where every $l\in \mathcal{M}\setminus\left\{1\right\}$, where $\bar{\mathbf{x}}_1$ converges to the final opinion vector at $k=1$. Therefore, agents of layer $l\in \mathcal{M}$ converge in $l$ time steps.
\end{proof}
Note that the influence graph $\mathcal{G}$ can be converted to a deep neural network when the assumption of Theorem \ref{DNNthm}, given by Eq. \eqref{DNN}, is satisfied. This is because agents of are classified into $M$ district groups, specified by $\mathcal{V}_1$ through $\mathcal{V}_M$, where agents $\mathcal{V}_l$ do not influence each other for every $l\in \mathcal{M}$. The input layer of the DNN consists of $N-N_R$ neurons where each neuron uniquely represents a stubborn agent.

For better clarification, Figure \ref{DNNCommunicaty} illustrates a schematic of a community consisting of $13$ people where they are identified by set $\mathcal{V}=\left\{1,\cdots,13\right\}$. We can express this $\mathcal{V}$ as $\mathcal{V}=\mathcal{V}_S\cup \mathcal{V}_R$ where disjoint subsets  $\mathcal{V}_S=\left\{1,2,3,4,14\right\}\subset \mathcal{V}$ and  $\mathcal{V}_R=\mathcal{V}\setminus \mathcal{V}_S$ define stubborn and regular agents, respectively. We can express $\mathcal{V}_R$ as $\mathcal{V}_R=\mathcal{V}_1\cup\mathcal{V}_2\cup\mathcal{V}_3$ where $\mathcal{V}_1=\left\{5,6,10,11\right\}$, $\mathcal{V}_2=\left\{8,9,13\right\}$, and $\mathcal{V}_3=\left\{7\right\}$. The inter-agent influences shown in Fig. \ref{DNNCommunicaty}(a) can be converted to the DNN shown in Fig. \ref{DNNCommunicaty}(b). The DNN' input layer consists of five neurons representing the stubborn agents. The DNN has three interior layers; therefore, $\mathcal{M}=\left\{1,2,3\right\}$. Note that the neurons of layers $1$ through $3$ are defined by Eq. \eqref{Wl} as follows: $\mathcal{W}_1=\mathcal{V}_S\cup\mathcal{V}_1$, $\mathcal{W}_2=\mathcal{W}_1\cup\mathcal{V}_2$, $\mathcal{W}_3=\mathcal{W}_2\cup\mathcal{V}_3$.

\section{Decentralized Acquisition of Biases and Influences}\label{Decentralized Acquisition of Biases and Influences}
Every agent $i\in \mathcal{V}$ that satisfies Assumption \ref{knownutility} is aware of its initial reward, $u_i(0)=\mathcal{U}\left(\mathbf{o}_i(0)\right)$ and $u_j(k)=\mathcal{U}\left(\mathbf{o}_j(k)\right)$ which is the current reward received by every in-neighbor $j\in \mathcal{N}_i$. Therefore, agent $i\in \mathcal{V}_R$ can obtain
\begin{equation}
    \bar{u}_i(k)=u_i(0)+\sum_{j\in \mathcal{N}_i}u_j(k)
\end{equation}
and set up matrix $\mathbf{L}(k)$ by defining $L_{ij}$ as follows:
\begin{equation}\label{Lij}
    L_{ij}(k)=
    \begin{cases}
        {u_j(k)\over\bar{u}_i(k)}&j\in \mathcal{N}_i\\
        {u_i(0)\over\bar{u}_i(k)}&j=N+i\\
        0&\mathrm{otherwise}
    \end{cases}
    ,\qquad \forall i\in \mathcal{V}_R.
\end{equation}
By applying Eq. \eqref{Lij}, to define matrix $\mathbf{L}$, every row of matrix $\mathbf{L}$ sums up to $1$. Agent $i\in \mathcal{V}_R$ can also obtain its own bias $\lambda_i(k)$ and its own reliance of agent $j\in \mathcal{N}_i$ by
\begin{equation}
    \lambda_{i}(k)=L_{i,N+i}(k),\qquad \forall i\in \mathcal{V}_R,
\end{equation}
\begin{equation}
    w_{i,j}(k)={L_{i,j}(k)\over 1-L_{i,N+i}(k)},\qquad \forall i\in \mathcal{V}_R,~j\in \mathcal{N}_i.
\end{equation}

\begin{figure}[h]
\centering
\subfigure[]{\includegraphics[width=0.98\linewidth]{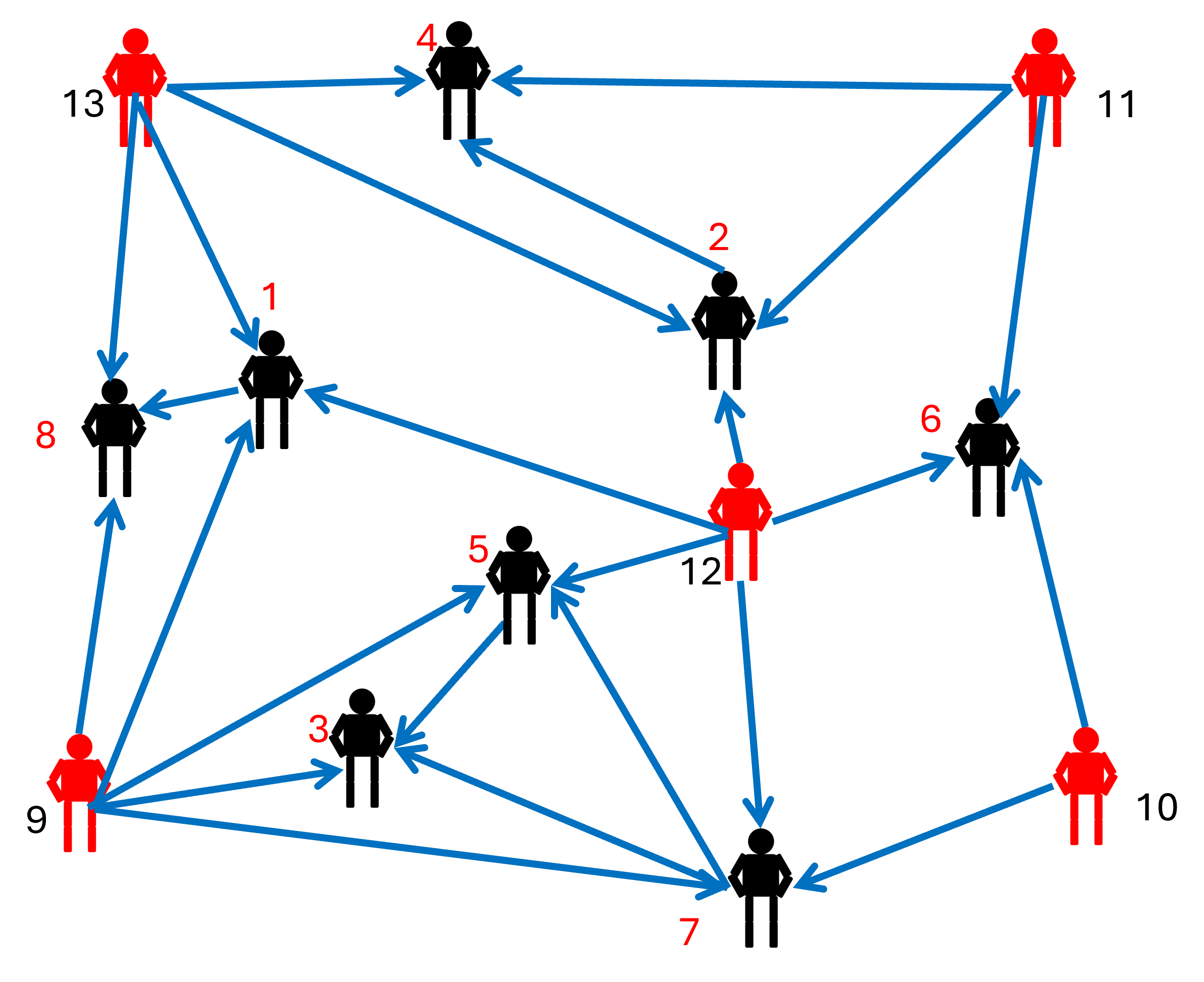}}
\subfigure[]{\includegraphics[width=0.98\linewidth]{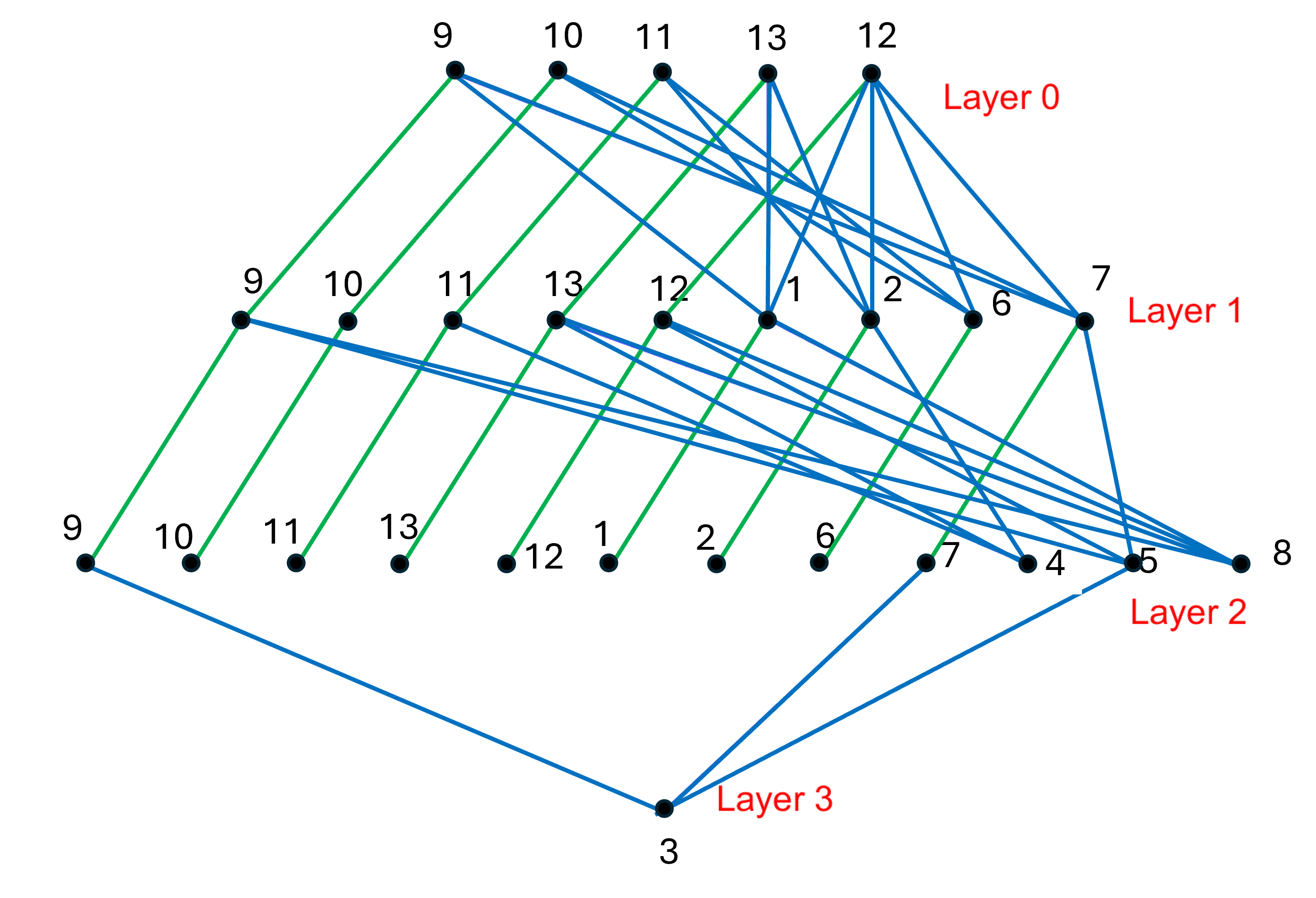}}
\caption{\label{DNNCommunicaty} The example inte-agent influences shown in sub-figure (a)  is converted into a DNN shown in sub-figure (b).
}
\end{figure}
Note that 
\begin{equation}
    \sum_{j\in \mathcal{N}_i}w_{i,j}(k)=1,\qquad \forall k\in \mathbb{Z},~\forall i\in \mathcal{V}_R.
\end{equation}
\section{Simulation Results}\label{Simulation Results}
In this section, we present the simulation results for two dimensional opinion evolution under DNN-based reducible and irreducible networks. To generate the results, we consider opinion evolution over two-dimensional space $o_1-o_2$ and  define utility function as two-dimensional Gaussian distribution with mean vector $\mu=\begin{bmatrix}0.25&0.6\end{bmatrix}^T$ and covariance matrix $\mathbf{\Sigma}=0.1\mathbf{I}_2$.
  
\begin{figure}[ht]
\centering
\includegraphics[width=0.48 \textwidth]{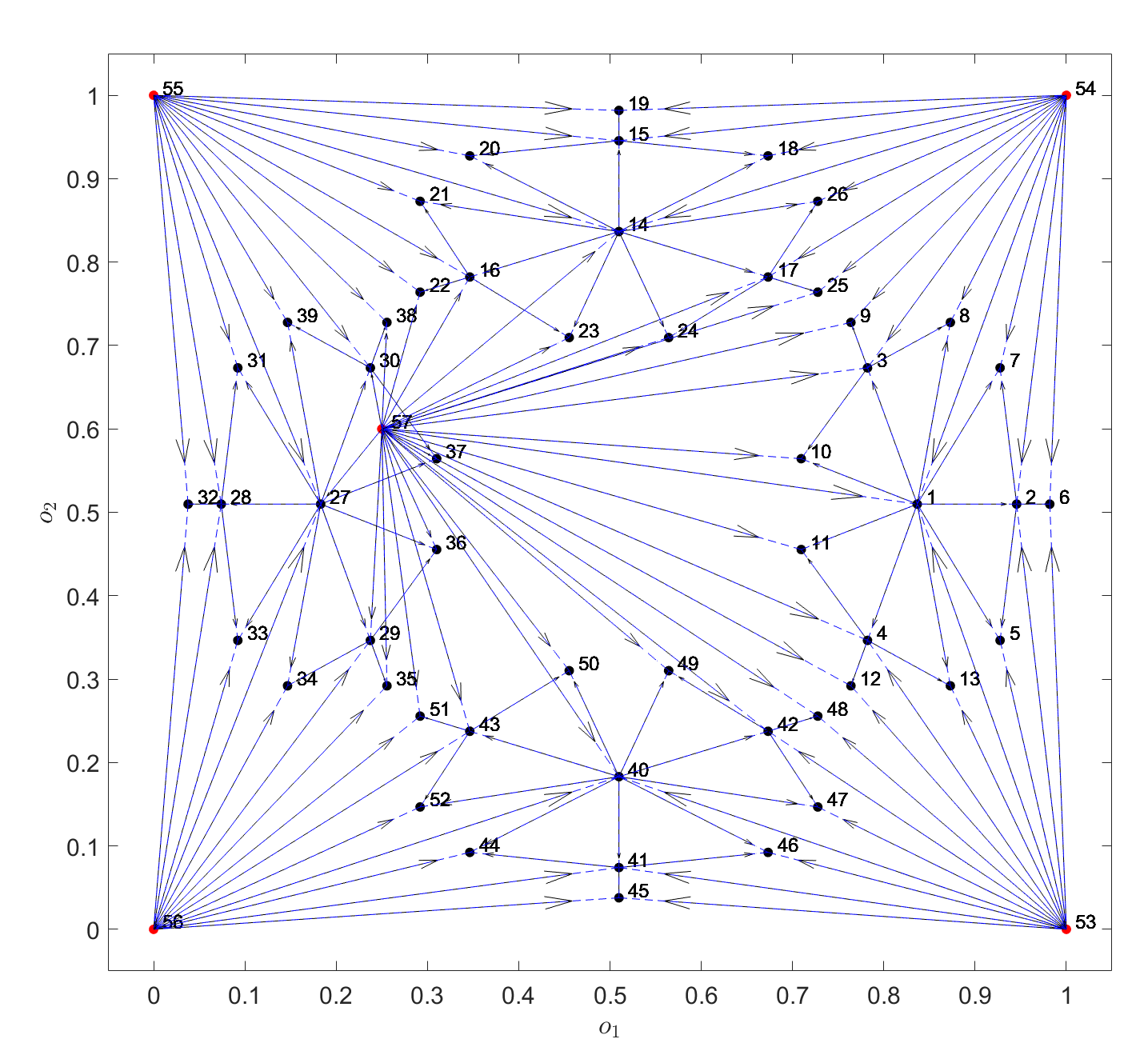}
\caption{Initial distribution of of $\mathcal{V}$ in the opinion space $o_1-o_2$. The red and black  nodes represent ``stubborn'' and ``regular'' agents, respectively.}
\label{initialdistribution} 
\end{figure}
\subsection{Opinion Evolution under DNN-Based Reducible Network}
We consider a community of $57$ agents define by set $\mathcal{V}=\left\{1,\cdots,57\right\}$. Set $\mathcal{V}$ can be expressed as  $\mathcal{V}=\mathcal{V}_S\cup\mathcal{V}_R$, where $\mathcal{V}_S=\left\{1,\cdots,5\right\}$ and $\mathcal{V}_R=\left\{6,\cdots,57\right\}$ identify the stubborn and regular agents, respectively. We divide regular agents into three groups; therefore, $\mathcal{V}_R$ can be expressed as $\mathcal{V}_R=\mathcal{V}_1\cup\mathcal{V}_2\cup \mathcal{V}_3$, where $\mathcal{V}_1=\left\{6,19,32,45\right\}$, $\mathcal{V}_2=\left\{7,8,9,20,21,22,33,34,35,46,47,48\right\}$, and $\mathcal{V}_3=\mathcal{V}\setminus\left(\mathcal{V}_1\cup\mathcal{V}_2\right)$. The initial configuration of the agent team in the opinion space $o_1-o_2$ is shown in Fig. \ref{initialdistribution}. The communication among the  regular agents is defined by a deep neural network with $M=3$ internal layers as shown in Fig. \ref{DNN}, where the input layer consists of five neurons representing the stubborn agents.

\begin{figure*}[ht]
\centering
\includegraphics[width=0.98 \textwidth]{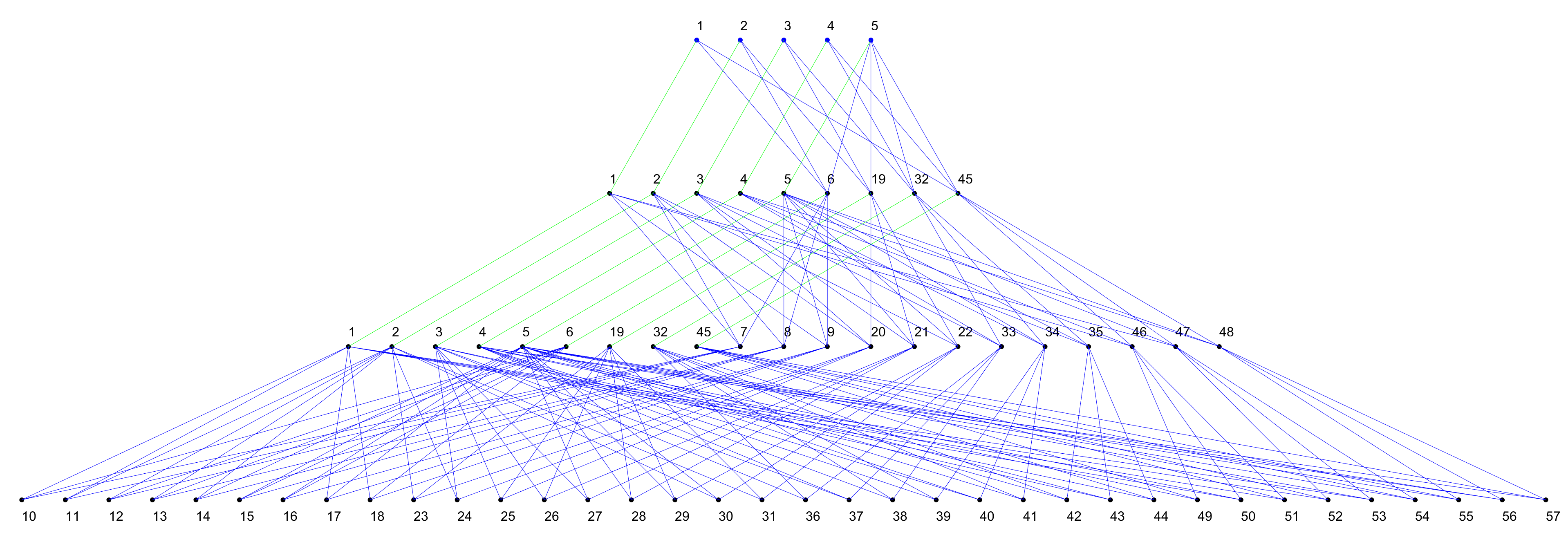}
\caption{Inter-agent communication is defined by a deep neural network with $M=3$ internal layers. }
\label{DNN} 
\end{figure*}

\begin{figure}[ht]
\centering
\includegraphics[width=0.45 \textwidth]{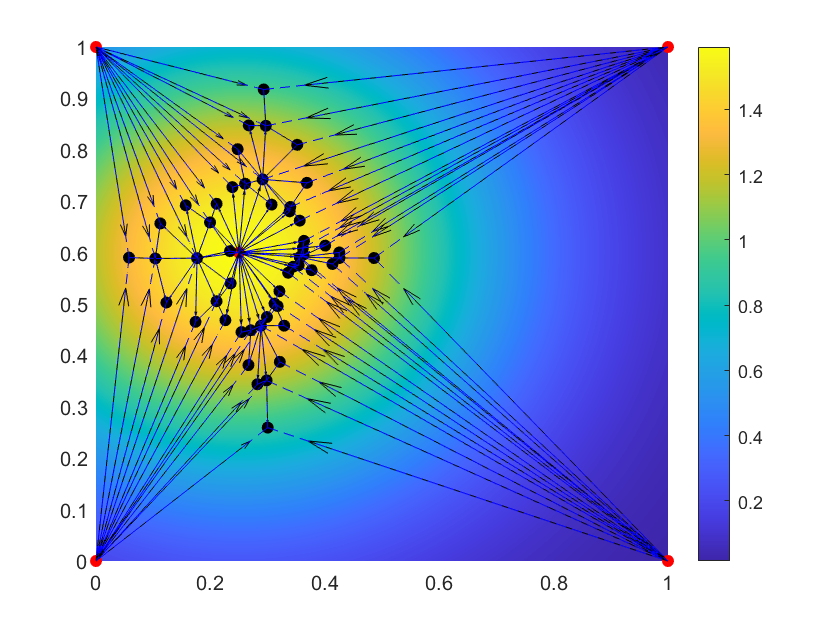}
\caption{The reward distribution over the opinion space and the final opinion of the community.}
\label{Reward} 
\end{figure}

\begin{figure}[ht]
\centering
\includegraphics[width=0.45 \textwidth]{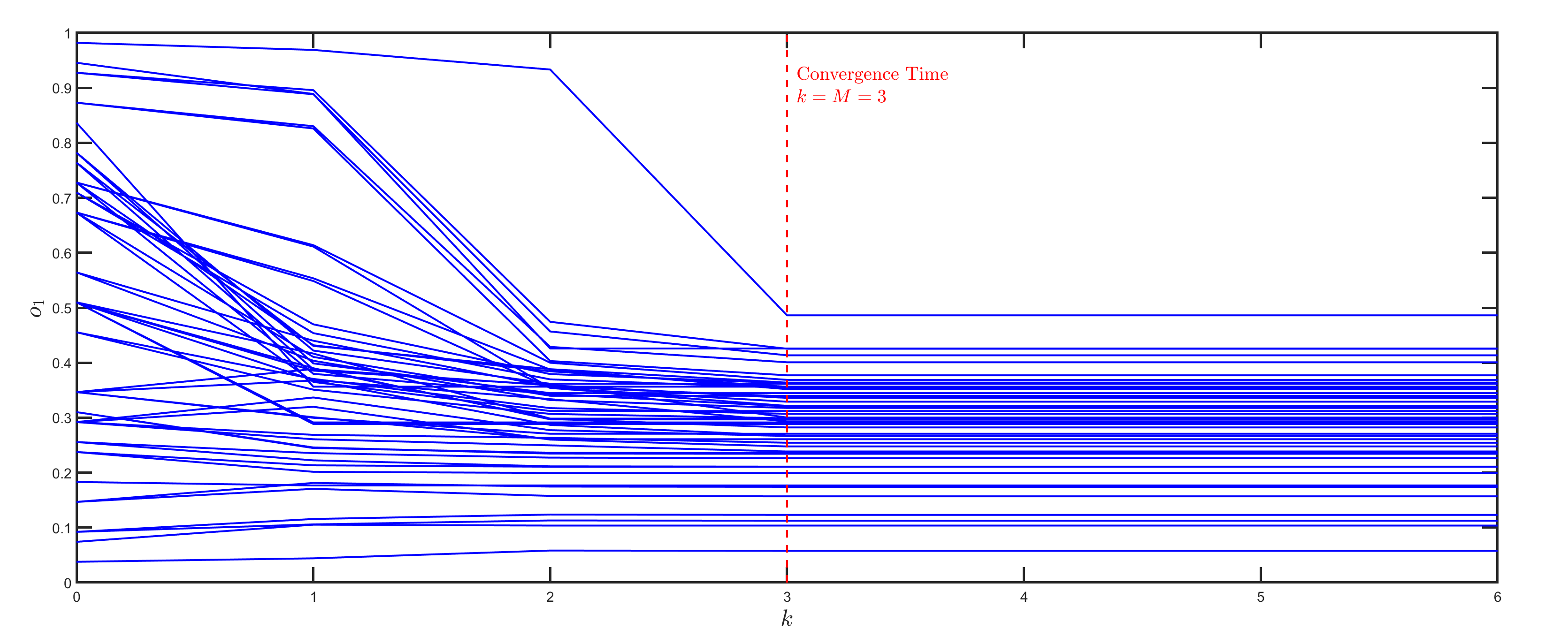}
\caption{The $o_1$ (first) component of opinion of all regular agents versus discrete time $k$.It is seen that the first component of all regular agents converges to its final value in $M=3$ time steps.}
\label{O11} 
\end{figure}

\begin{figure}[ht]
\centering
\includegraphics[width=0.45 \textwidth]{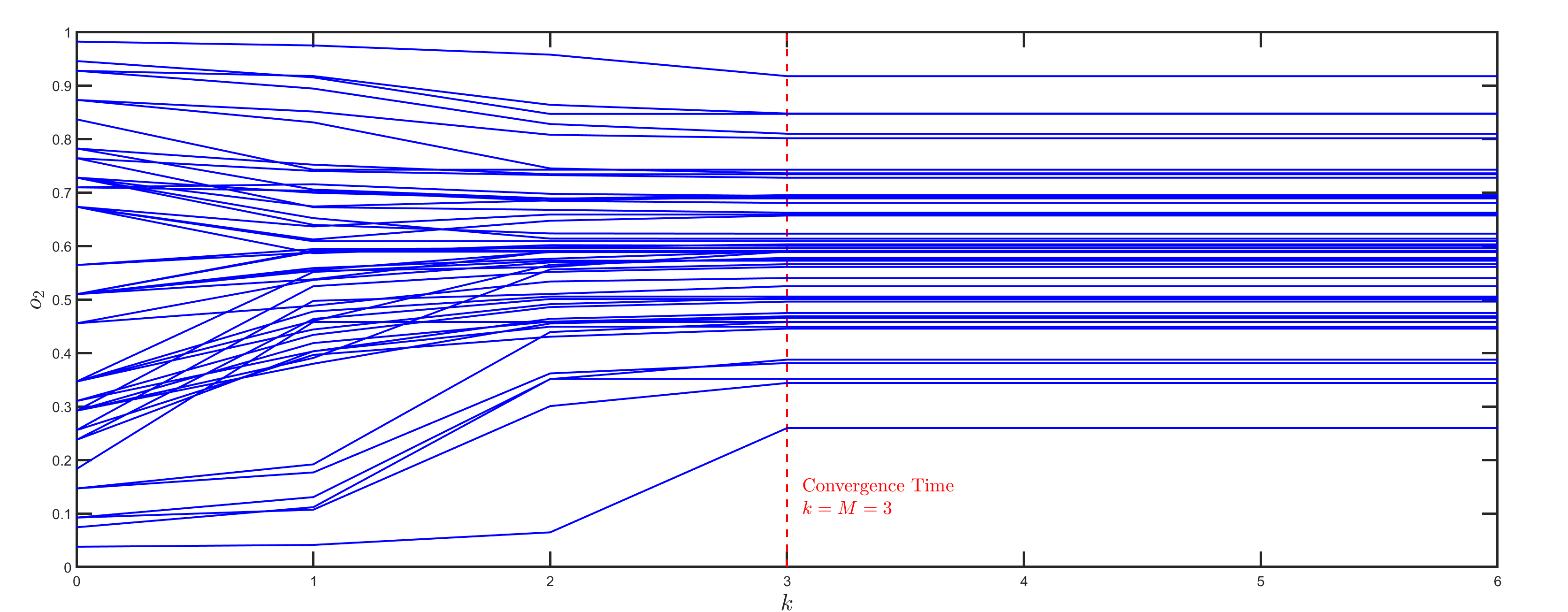}
\caption{The $o_2$ (second) component of opinion of all regular agents versus discrete time $k$. It is seen that the second component of all regular agents converges to its final value in $M=3$ time steps.}
\label{O22} 
\end{figure}

\begin{figure}[ht]
\centering
\includegraphics[width=0.45 \textwidth]{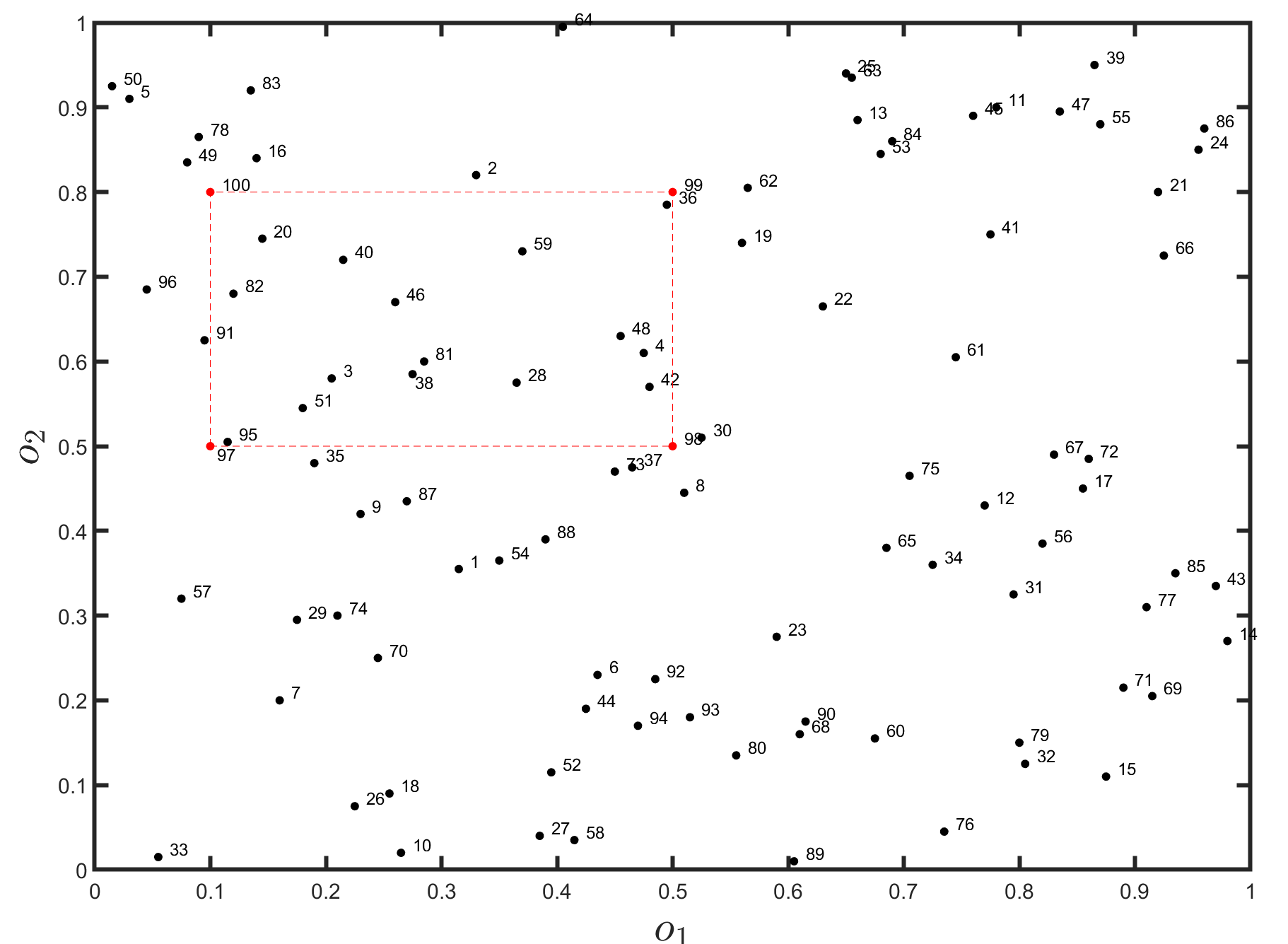}
\caption{Initial distribution of $\mathcal{V}$ in the opinion space $o_1-o_2$. The red and black  nodes represent ``stubborn'' and ``regular'' agents, respectively.}
\label{InitialDistributionSteering} 
\end{figure}

\begin{figure}[ht]
\centering
\includegraphics[width=0.45 \textwidth]{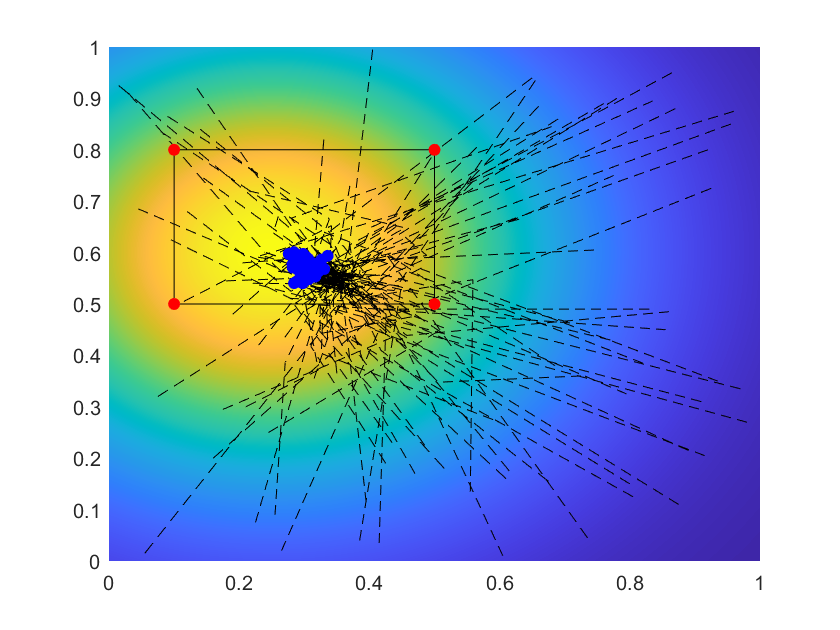}
\caption{Steering opinion through containment control.}
\label{Convergence} 
\end{figure}

\begin{figure}[ht]
\centering
\includegraphics[width=0.45 \textwidth]{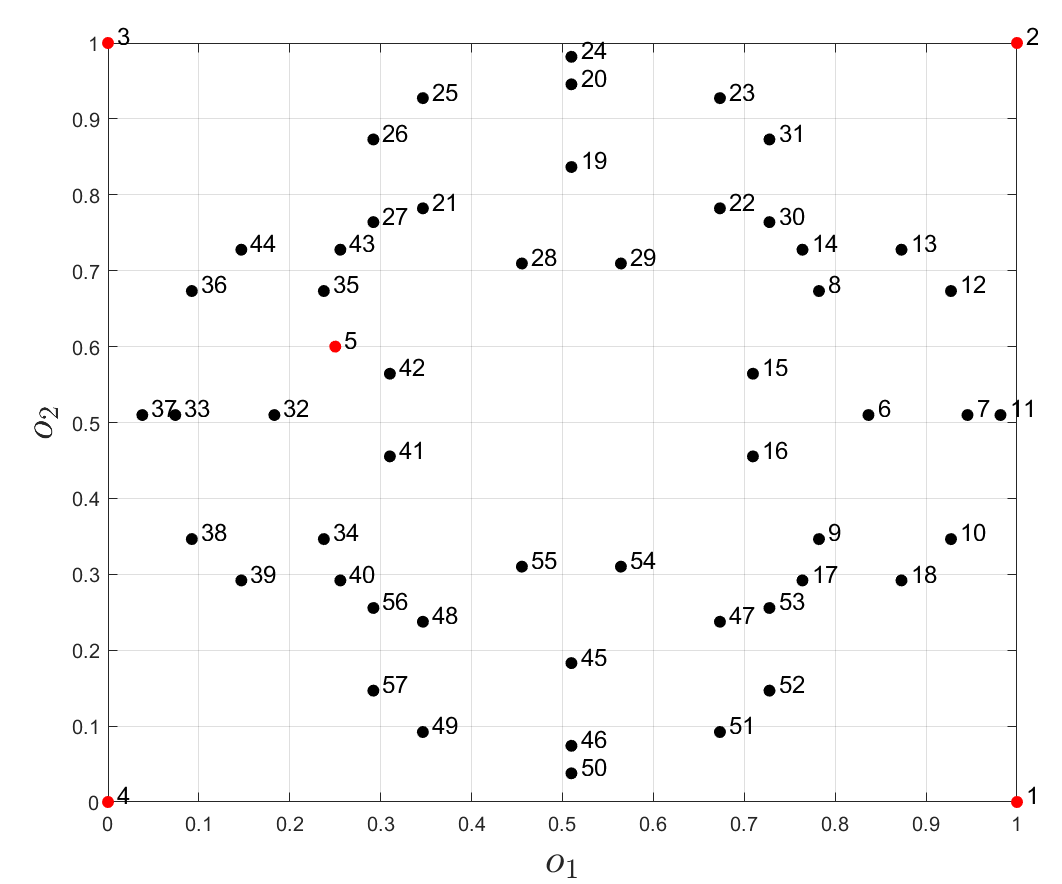}
\caption{Initial distribution of of $\mathcal{V}$ in the opinion space $o_1-o_2$. The red and black  nodes represent ``stubborn'' and ``regular'' agents, respectively.}
\label{initialdistributionIrreducible} 
\end{figure}
\begin{figure}[ht]
\centering
\includegraphics[width=0.45 \textwidth]{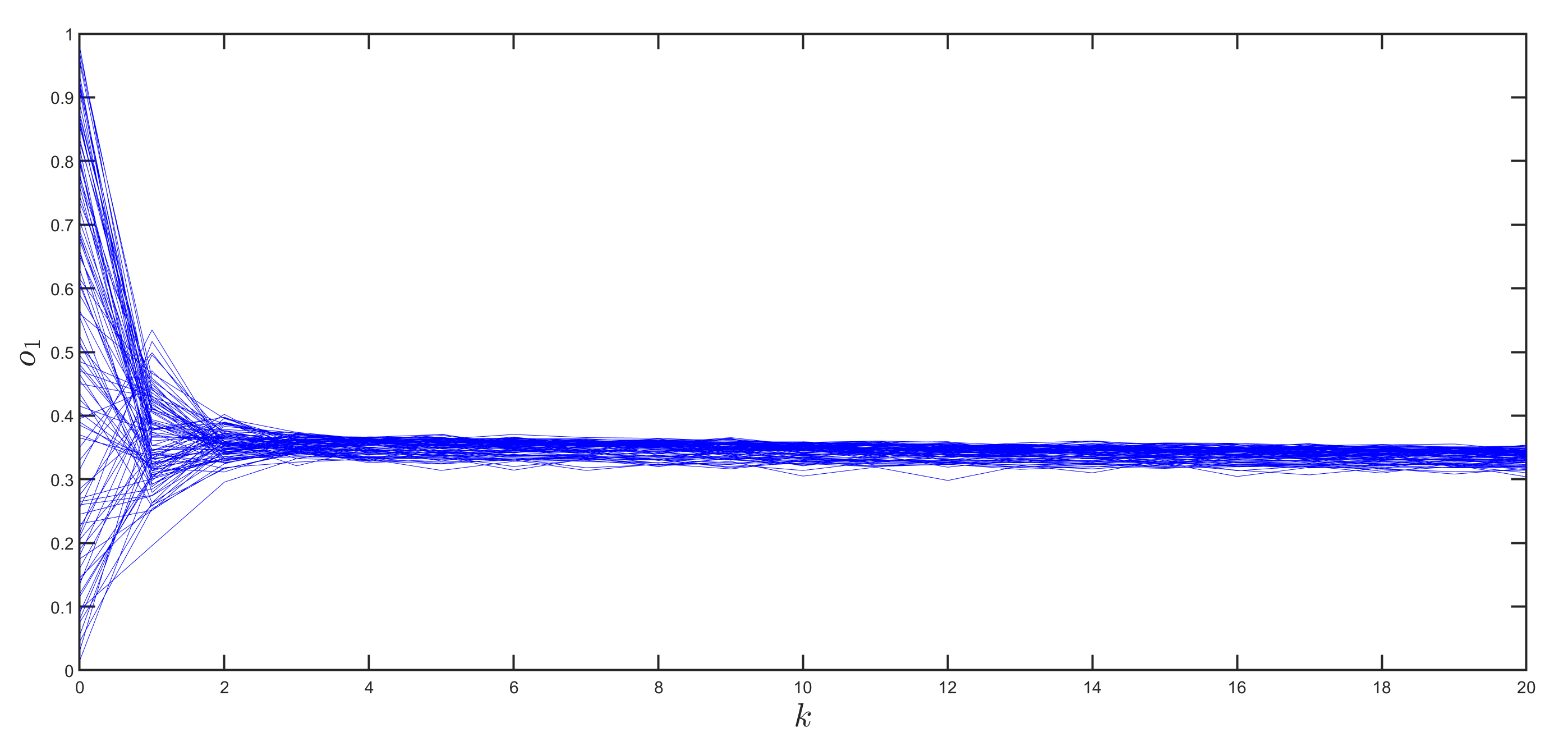}
\caption{The $o_1$ (first) component of opinion of all regular agents versus discrete time $k$.}
\label{O11-V2} 
\end{figure}

\begin{figure}[ht]
\centering
\includegraphics[width=0.45 \textwidth]{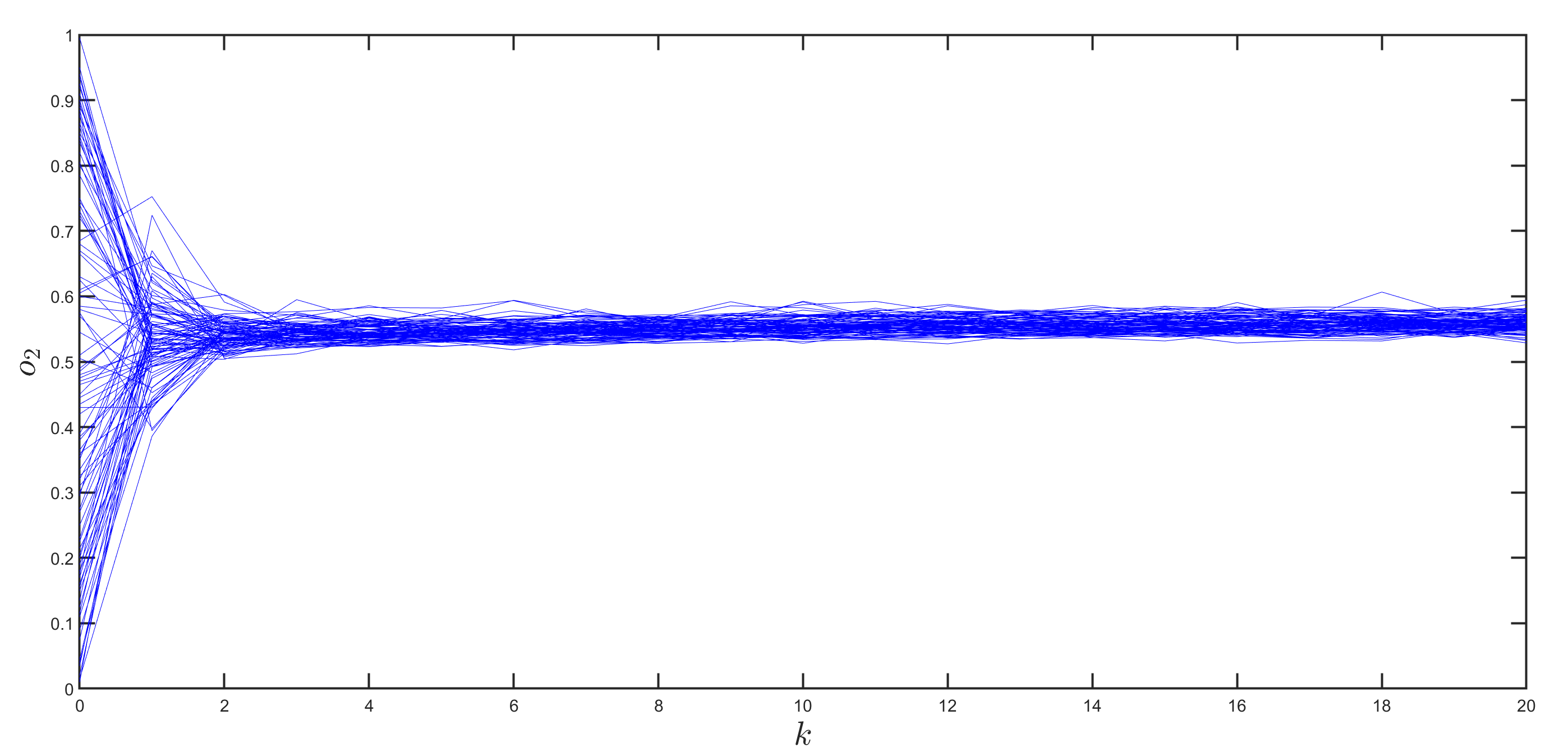}
\caption{The $o_2$ (second) component of the opinions of all regular agents versus discrete time $k$. }
\label{O22-V2} 
\end{figure}

\begin{figure}[ht]
\centering
\includegraphics[width=0.45 \textwidth]{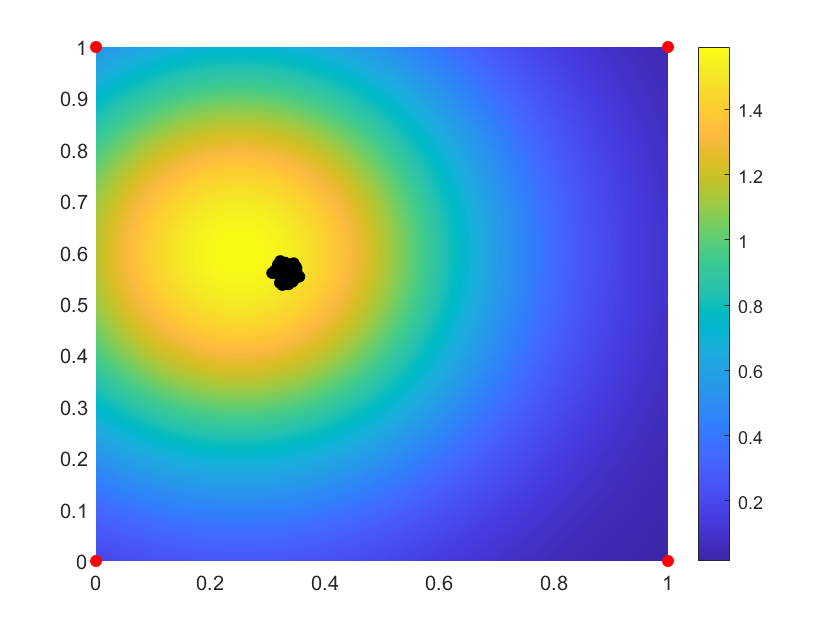}
\caption{The reward distribution over the opinion space and the final opinion of the community.}
\label{RewardV2} 
\end{figure}

Figure \ref{Reward} shows the reward distribution defined over the opinion space $o_1-o_2$ and the final opinion of the regular agents (black dots) and stubborn agents (red dots), where the regular agents achieve the final opinion in a truly decentralized fashion by updating their biases and influences using the formula given in Section \ref{Decentralized Acquisition of Biases and Influences}. Figurs \ref{O11} and \ref{O22} plot $o_1$ and $o_2$ components of all regular agents versus discrete time $k$ for $k=0,\cdots,7$. It is observed that operations of regular agents converge to their final vlues in $M=3$ time steps, as proven by Theorem \ref{DNNthm}.

\subsection{Opinion Evolution under Irreducible Network}
We consider a community consisting of $N=100$ agents defined by set $\mathcal{V}=\left\{1,\cdots,100\right\}$. Supposing there exist four stubborn agents, we express $\mathcal{V}$ as $\mathcal{V}=\mathcal{V}_R\cup\mathcal{V}_S$, where $\mathcal{V}_R=\left\{1,\cdots,96\right\}$ and $\mathcal{V}_S=\left\{97,\cdots,100\right\}$ identify regular and stubborn agents, respectively.  

\subsubsection{Convergence and Opinion Steering under Irreducible Network}
We consider an opinion evolution scenario in which the stubborn agents' opinions do not lie at the boundary of the agents' initial opinion configuration. Let agents be initially distributed in the opinion space as shown in Fig. \ref{InitialDistributionSteering} with stubborn agents forming a rectangular convex hull that does not contain all the regular agents. For simulation,  without loss of generality, we assume  that the edge set $\mathcal{E}_k=\mathcal{E}$ is time-invariant. However, the communication weights and biases of regular agents are time-varying and updated through the reward maximization strategy presented in Section \ref{Decentralized Acquisition of Biases and Influences}. Figure \ref{Convergence} shows that the final opinions of all regular agents are steered from an arbitrary initial distribution towards  the rectangular convex hull that  is obtained based on the opinions of the stubborn agents.


\subsubsection{Opinion Evolution under Time-Varying and Random Network}
We consider the same reward distribution shown in Fig. \ref{Reward} (i.e. reward distributions in Figs. \ref{Reward} and \ref{RewardV2} are the same). The initial distribution of the agents in the opinion space $o_1-o_2$ is shown in Fig. \ref{initialdistributionIrreducible}. We suppose that regular agents are randomly influenced by some other regular or stubborn agents. Therefore, the edge set $\mathcal{E}_k\subset \mathcal{V}\times \mathcal{V}$ is stochastic and time-varying. The $o_1$ and $o_2$ components of regular agents are plotted versus discrete time $k$ in Figs. \ref{O11-V2} and \ref{O22-V2}, respectively. Figures \ref{O11-V2} and \ref{O22-V2} illustrate that the regular agents tend to their final opinions at every discrete time $k>4$. Figure \ref{RewardV2}    the configuration of regular agents at discrete time $k=20$ in the opinion space $o_1-o_2$.

\section{Conclusion}\label{Conclusion}
This paper models the FJ opinion evolution dynamics as a containment control problem; stubborn agents are considered as leaders specifying a desired opinion distribution of the society. It was proven and demonstrated that regular agents can achieve a desired opinion distribution in a truly decentralized fashion through random communication with some other (regular or stubborn) agents. The paper assumed that every regular agent is completely open to following the reward maximization rule suggested in Section \ref{Decentralized Acquisition of Biases and Influences} to assign its communication weight  based on the rewards acquired by its own in-neighbors (influences). For the future work, we will relax this assumption and develop a game-theoretic framework to develop a game-theoretic framework to maximize strategies for both regular and obstinate agents. This will allow the stubborn agents to modify the reward distribution in response to the regular agents' responses. 
\section{Acknowledgement}
 This work was supported by the
National Science Foundation under Award 2133690 and Award 1914581.
\bibliographystyle{IEEEtran}
\bibliography{reference}

\begin{IEEEbiography}[{\includegraphics[width=1in,height=1.25in,clip,keepaspectratio]{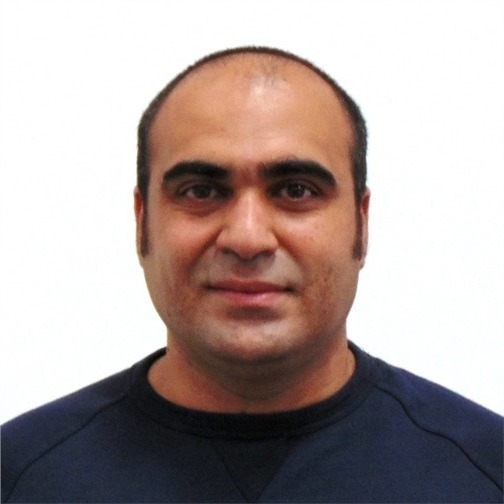}}]
{\textbf{Hossein Rastgoftar}} an Assistant Professor at the University of Arizona. Prior to this, he was an adjunct Assistant Professor at the University of Michigan from 2020 to 2021. He was also an Assistant Research Scientist (2017 to 2020) and a Postdoctoral Researcher (2015 to 2017) in the Aerospace Engineering Department at the University of Michigan Ann Arbor. He received the B.Sc. degree in mechanical engineering-thermo-fluids from Shiraz University, Shiraz, Iran, the M.S. degrees in mechanical systems and solid mechanics from Shiraz University and the University of Central Florida, Orlando, FL, USA, and the Ph.D. degree in mechanical engineering from Drexel University, Philadelphia, in 2015. 
\end{IEEEbiography}

\end{document}